\def\ps@pprintTitle{%
 \let\@oddhead\@empty
 \let\@evenhead\@empty
 \def\@oddfoot{\centerline{\thepage}}%
 \let\@evenfoot\@oddfoot}
\date{}
\def\texpsfig#1#2#3{\vbox{\kern #3\hbox{\includegraphics{#1}\kern #2}}\typeout{(#1)}}
\theoremstyle{plain}
\newtheorem{thm}{Theorem}[section]
\newtheorem{dfn}[thm]{Definition}
\newtheorem{rem}{Remark}[section]
\theoremstyle{remark}
\theoremstyle{plain}
\newtheorem{lem}[thm]{Lemma}
\newtheorem{prop}[thm]{Proposition}
\newtheorem{coroll}[thm]{Corollary}
\theoremstyle{definition}
\newcommand{\e}{{\rm e}}        
\def\R{\mathbb{ R}}             
\def\E{\mathbb{ E}}             
\def\Q{\mathbb{ Q}}             
\def\N{\mathbb{N}}
\def\P{\mathbb{ P}}             
\def\F{\mathcal{F}}             
\def\G{\mathcal{G}}
\def\Cov{\mathrm{\mathbb{C}ov}}
\def\Var{\mathrm{\mathbb{V}ar}}   
\renewcommand{\d}{{\,\rm d}}      
\def\e{{\mathrm{e}}}
\def\1{{\mathbbm{1}}}            
\theoremstyle{plain}
\numberwithin{equation}{section}	     
\let\originalleft\left
\let\originalright\right
\renewcommand{\left}{\mathopen{}\mathclose\bgroup\originalleft}
\renewcommand{\right}{\aftergroup\egroup\originalright}
\newcommand{\X}{X^{\bm{\vartheta}}}  	
\newcommand{\Xreal}{X^{\bm{\theta}}}  	
\newcommand{\fXquad}{f\left(x; \Xquad(t)\right)}  	 
\newcommand{\Xzeta}{X^{\bm\zeta, \bm\vartheta}}  	
\newcommand{\Xzetareal}{X^{\bm z, \bm\vartheta}}  	
\newcommand{\Xlv}{\bar X}  
\newcommand{\Xstoch}{\widehat X}  
\newcommand{\Xfullstoch}{\widetilde X}  
\newcommand{\Y}[1]{Y^{\vartheta}_{#1}}  	
\newcommand{\Yreal}[1]{Y^{\theta}_{#1}}  	
\newcommand{\Yquad}[2]{Y^{\theta_{{#1}_{#2}}}_{#2}}  
\newcommand{\Wquad}{\Bigl(\prod\limits_{j=0}^M w_{n_j}\Bigr)}  	
\newcommand{\Vquad}{V_{\smallvert\bm\ell\smallvert}}
\newcommand{\sumquad}{\sum\limits_{{n_0, \dots, n_M}{=1}}^{N_0, \dots, N_M}}  	
\newcommand{\sumquadL}{\sum\limits_{\smallvert\ell\smallvert = \bm 1}^{\smallvert \bm L\smallvert}}
\newcommand{\smallvert}{\scalebox{.5}{\raisebox{.4pt}{$\vert$}}}   	 
\newcommand{\thetaquad}{{\bm \theta}_{\smallvert \bm{n} \smallvert}}  	
\newcommand{\zquad}{{\bm z}_{\smallvert \bm{\ell} \smallvert}}  	
\newcommand{\Xquad}{X^{\thetaquad}}  	
\newcommand{\Xquadstoch}{X^{\zquad, \thetaquad}}  	 
\newcommand{\Xzetaquad}{X^{\zquad, \bm\vartheta}}  	 
\newcommand{\Xzetaquadquad}{X^{\zquad, \thetaquad}}  	
\newcommand{\XMarkov}{X(t; \bm{\vartheta}, R)}  
\newcommand{\psireal}[1]{\psi^{\theta}_{#1}}  	
\newcommand{\Xfull}{X^{\zeta, \vartheta, \mathcal{M}}}
\title{{Consistent asset modelling with random coefficients and switches between regimes}}
\begin{document}

\author[1]{Felix L.\ Wolf}\
\ead{Felix.Wolf@ulb.be}
\author[1]{Griselda Deelstra}
\ead{Griselda.Deelstra@ulb.be}
\author[2,3]{Lech A.\ Grzelak}
\ead{L.A.Grzelak@uu.nl}
\address[1]{Department of Mathematics, Universit\'e libre de Bruxelles, Brussels, Belgium}
\address[2]{Mathematical Institute, Utrecht University, Utrecht, the Netherlands}
\address[3]{Rabobank, Utrecht, the Netherlands}

\footnotesize{
\begin{abstract}
\noindent
{We explore a stochastic model that enables capturing external influences in two specific ways. The model allows for the expression of uncertainty in the parametrisation of the stochastic dynamics and incorporates patterns to account for different behaviours across various times or regimes.}
To establish our framework, we initially construct a model with random parameters, where the switching between regimes can be dictated either by random variables or deterministically. 
{Such a model is highly interpretable.}
{We further ensure mathematical consistency by demonstrating that the framework can be elegantly expressed through \emph{local volatility} models taking the form of standard jump diffusions.}
Additionally, we consider a Markov-modulated approach for the switching between regimes characterised by random parameters.
For all considered models, we derive characteristic functions, providing a versatile tool with wide-ranging applications. 
In a numerical experiment, we apply the framework to the financial problem of option pricing. The impact of parameter uncertainty is analysed in a two-regime model, where the asset process switches between periods of high and low volatility imbued with high and low uncertainty, respectively.
\end{abstract}
}
\begin{keyword}
\footnotesize{Randomisation, {Switching}, Markov-Modulation, Local Volatility, Asset Modelling}
\end{keyword}
\maketitle


{\let\thefootnote\relax\footnotetext{The views expressed in this paper are the personal views of the authors and do not necessarily reflect the views or policies of their current or past employers.}}
\normalsize
%
\section{Introduction}\label{sec:intro}
%
\begin{figure}[h]
\centering
\includegraphics[width=.5\textwidth]{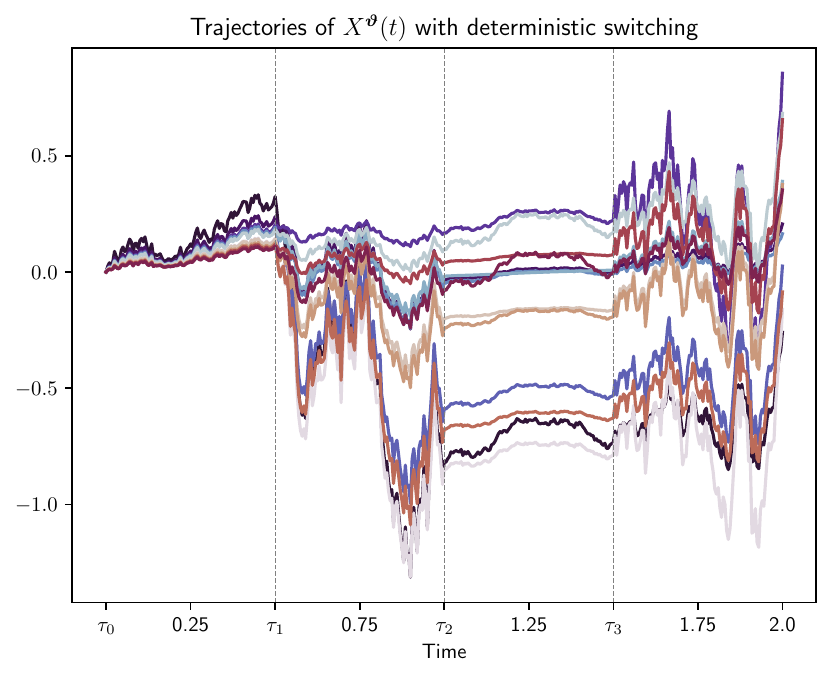}
\caption[Path simulation: Deterministic switching, randomised]
{The figure illustrates trajectories of the composite process $\X(t)$ with deterministic switches and randomised volatility. The regimes are associated with randomiser distributions alternating between a low and a high mean.
Depicted are the trajectories resulting from a single underlying Brownian motion path and several samples of the randomisers. }
\label{fig1}
\end{figure}
{In various domains, stochastic processes are used to account for the unforeseeable nature of the modelled subject.} 
In this article, we describe a novel approach for a jump diffusion 
{that offers additional flexibility in the model parametrisation and a shifting mechanism between regimes with varying uncertainties and model dynamics.}
{Whilst the framework is presented in a comprehensive manner that can be adapted to many modelling cases, we draw conclusions for the modelling of financial assets at various points throughout the article.} One application is found in risk management, where scenarios involving seasonality can be constructed and quickly priced due to the availability of characteristic functions.
\par
The subject of study is a stochastic process obtained {through multiple steps}. First, for each regime, a \emph{component process} is defined with its particular dynamics. 
Then, \emph{switching times} are defined and serve as the points of concatenation at which these component processes are combined to form a \emph{composite process} exhibiting regime switches. These switching times may be stochastic.
The uncertainty feature is encoded into the component processes, each of which depends on an additional random variable we denote the \emph{randomiser}. Instead of using deterministic drift and volatility coefficients, we extend these to functions that are dependent on the randomiser.
However, this is closely related to mixture approaches which can introduce ambiguities regarding the model definition in problems involving nested expectations, {an example thereof is presented in \cite{PiterbargHangover}.} 
{This issue is addressed in the final step of the modelling approach, which involves constructing another stochastic process that closely mimics the dynamics of the process with randomisers, but no longer contains additional random variables.}
{The resulting final processes belong to what are known as \emph{local volatility models} in the Mathematical Finance literature.}
\par
Multiple models are proposed in this article which vary in how the composite process is obtained from the component processes. Departing from deterministic switching times, we advance to more sophisticated models based on stochastic switching times and a Markov-modulated model. In all cases, characteristic functions of the processes are obtained.
\par
\Cref{fig1} depicts exemplary paths of a simple composite process with deterministic switches and randomised volatility coefficients,
\begin{equation}\label{eq:easyX}
\d \X(t) = \sum_{j=0}^3 \1_{t\in[\tau_j, \tau_{j+1})} \left(\left(0.05 - \frac{\vartheta_j^2}{2}\right) \d t + \vartheta_j \d \widetilde W(t)\right),
\end{equation}
for some driving Brownian motion $\widetilde W(t)$, a random vector $\bm\vartheta = (\vartheta_0, \vartheta_1, \vartheta_2, \vartheta_3)$ and deterministic switches at $\tau_1 = 0.5, \tau_2=1, \tau_3 = 1.5$. 
The process alternates between two types of regimes characterised by low and high volatility, expressed through randomisers $\vartheta_j$, $0\leq j\leq3$, following distributions with low and high means, respectively. Within each regime, the volatility varies between trajectories depending on the sample of the randomiser. This creates distinctions between trajectories, although each depicted trajectory is driven by the same realisation of the underlying Brownian motion $\widetilde W(t)$. A comparison to the trajectories of the final local volatility model is drawn in \Cref{fig2}, presented in the numerical experiment section later in this article.
\par
Besides the increased interpretability obtained from incorporating random variables into the model parametrisation, the approach also results in a richer process structure. 
{For example, \cite{GrzelakRand1} demonstrated that this feature enables affine diffusions to accurately capture the volatility smiles observed in the market data of Financial options.}
The construction of a local volatility model {to address the issues of mixtures}, was initially introduced by \cite{Brigo2000risk} and more recently revisited in the context of affine interest rate models by \cite{GrzelakRand2}. 
The challenge in transitioning to a local volatility model lies in identifying the appropriate local volatility structure corresponding to the mixture distribution. 
In our approach, an ansatz is found utilising the Gauss quadrature for the measure integrals corresponding to the mixture distribution.
To illustrate the concept, consider the above composite process $\X(t)$ on the interval $[0, \tau_1)$, which we here denote $Y^{\vartheta_0}(t)$. For every $t\in[0, \tau_1)$, its probability density function is given by
\begin{equation}
f\left(x;Y^{\vartheta_0}(t)\right) = \int_{D_0} f\left(x; Y^{\theta}(t)\right)\d \theta,
\end{equation}
where $D_0$ denotes the domain of the random variable $\vartheta_0$ and $\d Y^{\theta}(t) = (r-\theta^2/2)\d t + \theta \d \widetilde W(t)$ is a standard diffusion with deterministic volatility coefficient $\theta\in D_0$. For a large class of continuous random variables $\vartheta_0$, this integral can be represented by
\begin{equation}\label{eq:introquad}
\int_{D_0} f\left(x; Y^{\theta}(t)\right)\d \theta = \sum_{i=1}^N w_i f(x; Y^{\theta_i}(t)) + \varepsilon_N,
\end{equation}
where $(w_i, \theta_i)_{i=1}^N$ are the {Gauss quadrature pairs} associated with the discretisation and $\varepsilon_N$ is an exponentially vanishing approximation error. Computation of these quadrature pairs can be undertaken efficiently and solely based on moments of the distribution of $\vartheta_0$ through the algorithm of \cite{golub1969calculation}. 
The right-hand side of \eqref{eq:introquad} serves as an arbitrarily close approximation of the density of $Y^{\vartheta_0}(t)$ in which no additional random variables appear and we can adapt the approach of \cite{Brigo2000risk} to construct a local volatility model.
\par
{Switches between process dynamics across different regimes allow for the modelling of seasonal behaviours such as business cycles.}
{Our approach emphasises the stochastic modelling of the switching times $\tau_j$, introduced in \eqref{eq:easyX}. This enables the utilisation of the previously discussed quadrature method, as it transforms the times spent in each regime into another set of continuous random variables within the process structure. Subsequently, we obtain another local volatility model in which the marginal density coincides with an arbitrarily exact approximation of the model involving random parameters and random switching times.}
{A more traditional approach of regime-switching involves the modelling of an underlying Markov process that determines the state of the process.} This goes back to at least \cite{Masi1995}, who consider the hedging problem of a European option following a Markov-modulated underlying. A broad overview of Markov-modulated regime-switching models is provided by \cite{Elliott2005}. 
{We also introduce a Markov-modulated randomised model, which provides a connection to this broad field of study.}

\par
The article is structured as follows. 
In \Cref{sec:randomisationSetting}, we establish the randomisation setup and fully construct the composite process using deterministic switching times, initially introduced in \eqref{eq:easyX} as a simplified toy model.
In \Cref{sec:SDEdet}, the local volatility model which circumvents potential issues with the randomisation formulation is constructed. We obtain error bounds for its density approximation compared to the randomised model as well as its characteristic function. 
The extension to regime switches at stochastic times is the subject of \Cref{sec:StochSwitch}. After enhancing the underlying probabilistic framework to allow for the stochastic switching times, we follow the previously established procedures of \Cref{sec:randomisationSetting,sec:SDEdet} in constructing composite processes, transforming these into local volatility models and obtaining their characteristic function. Here, we distinguish between two types of stochastic switching, involving a fixed and a random number of switches between regimes. 
In \Cref{sec:MMRP}, we propose a Markov-modulated randomised framework in which the regime switches are driven by an underlying Markov chain and obtain the characteristic function of the underlying process. 
Numerical results are showcased in \Cref{sec:numerics}, featuring trajectories of both the local volatility and stochastic switching models. Furthermore, we illustrate a financial application by solving the pricing problem of a European option with an underlying that is modelled using the proposed local volatility models. The results are summarised in \Cref{sec:conclusio} and additional proofs are given in the appendix.

%
\section{Composite randomisation setting}\label{sec:randomisationSetting}
In this section, we define the framework of random coefficients and switches between regimes. Randomness in the coefficients leads to what we denote as \emph{randomised processes}, which are stochastic processes that incorporate an additional source of randomness, influencing both their volatility and drift coefficients. For each path of a randomised process, a sample is drawn from a random variable immediately after the initial time, which henceforth influences the dynamics of this path.
The second main feature, switches between different regimes, is implemented by a form of concatenation of stochastic processes. We define a family of randomised processes with different coefficients, labelling each one as a \emph{component process}. Then, a composition rule is defined from which a \emph{composite process} emerges, which exhibits switching behaviour whenever its constituting component process changes.
\par
We assume the existence of a filtered probability space $(\bar\Omega, (\bar\F_t)_{t\geq0}, {\bar\Q})$ that satisfies the usual conditions of right continuity and completeness and is rich enough to support Brownian motion and jump processes.
On this probability space, stochastic processes in the classical sense, without an additional layer of randomisation, can be defined. 
\par
Further, we consider a probability space $(\Omega^*, \mathcal{A}, \Q^*)$ on which we define a continuous, real-valued random vector 
	\begin{equation}
	\bm\vartheta = (\vartheta_0, \dots, \vartheta_M) \in \R^{M+1},
	\end{equation}
with independent components $\vartheta_j$, $j\in\{0,\dots, M\}$ for some number $M\in\N$. We denote these random variables the \emph{randomisers}. 
\par
This construction, coupled with the assumption of independence between stochastic drivers and randomisers, establishes a valuable link between the randomised processes, with coefficients influenced by random variables $\vartheta_j$, and what we denote their associated conditional processes. In these conditional processes, the coefficients are contingent on real numbers $\theta_j$, which we interpret as realizations of the randomisers, $\theta_j = \vartheta_j(\omega^*)$ for realisations $\omega^* \in \Omega^*$. Since the conditional processes lack the additional layer of randomness, conventional results for stochastic processes can be applied to illuminate the characteristics of the randomised processes.
%
We thus define the probability space on which the randomised processes are defined by $(\Omega, (\G_t)_{t\geq 0}, \Q)$, where $\Omega := \bar\Omega \times \Omega^*,$ $\Q = \bar\Q \otimes \Q^*$, and $\G_{t^+} = \sigma(\F_{t^+} \cup \mathcal{A})$.
\par
Given a random vector of randomisers $\vartheta = \{\vartheta_0, \dots, \vartheta_M\}$, we formally define the \emph{randomised component processes} $\Y{j}(t)$, $j\in\{0,\dots,M\}$ {in \Cref{dfn:componentprocess}}. 
These processes are jump diffusions in which the drift and volatility terms depend on time and the random variable $\vartheta_j$. {We designate these processes as `components' since each one determines the dynamics in a specific regime of the later defined \emph{composite} process.}
In addition, we define the \emph{conditional component processes} $\Yreal{j}(t)$ as the stochastic processes obtained when a realisation $\theta_j = \vartheta_j(\omega^*)$ is given, hence these processes do not exhibit randomness in their coefficients and can be used to explore characteristics of the randomised component processes. 
	\begin{dfn}[Component processes]\label{dfn:componentprocess}
	{For every $j\in\{0, \dots, M\}$ and $t\geq0$, let $W_j(t)$ be a standard Brownian motion and let $P_j(t)$ be a Poisson process with intensity $\lambda \geq 0$. 	Let $\eta_j$ be a real-valued random variable with cumulative distribution function $F_\eta$. We assume mutual independence between all these sources of randomness. Further, let $b_j(t, z)$ and $\sigma_j(t,z)$, respectively, be real-valued functions which are finite for all $t\geq0$ and bounded for all $z\in\R$.
	\par
	Given the real-valued random variable $\vartheta_j$ previously defined as the randomiser, we introduce the \emph{randomised component process} $\Y{j}(t)$, $t \geq 0$, on the probability space $(\Omega, \G_t, \Q)$. It is given by
		\begin{align}
		\Y{j}(t) := \int_0^t b_j(u, \vartheta_j) \d u + \int_0^t \sigma_j(u, \vartheta_j)\d W_j(u) + \sum\limits_{i=1}^{P_j(t)} \eta_j(i), \label{eq:Y}
		\end{align}
	where $\eta_j(i)$ represents the magnitude of the $i$th jump of $P_j(t)$.
	\par
	For any realisation $\theta_j = \vartheta_j(\omega^*)$, we introduce the \emph{conditional component process} $\Yreal{j}(t)$, $t\geq0$, which is defined on the probability space $(\bar\Omega, \bar\F_t, \bar\Q)$ by
		\begin{equation}\label{eq:Yreal}
		\Yreal{j}(t) := \int_0^t b_j(u, \theta_j) \d u + \int_0^t \sigma_j(u, \theta_j)\d W_j(u) + \sum\limits_{i=1}^{P_j(t)} \eta_j(i).
		\end{equation}
	}
	\end{dfn}
Given the independence between components $\vartheta_j$ and $\vartheta_k$, for $j\neq k$, it follows immediately that any two component processes are independent of one another. {Note that the randomised component processes $\Y{j}(t)$ depend only on the random variable $\vartheta_j$, not the entire random vector $\bm\vartheta$. The notation is to be understood as a short-hand for $Y^{\vartheta_j}_j(t)$, the same holds for the conditional component process $\Yreal{j}(t)$ and its characterising realisation $\theta_j$.}
\par
Each conditional component processes \eqref{eq:Yreal} is a jump diffusion suitable to fit various popular financial models, {see \Cref{sec:numerics} in which we consider the Merton jump diffusion for log-prices \citep{Merton1976}.}
The randomised component process \eqref{eq:Y} can be understood as an extension in which the drift and volatility coefficients depend on a random variable $\vartheta_j$ whose outcome is immediately known after the initial time, analogous to the model introduced by \cite{GrzelakRand1}. The interpretation of the parameter $\theta_j$ as a realisation of the random variable $\vartheta_j$ forms the connection by which randomised processes can be treated analytically.
\par
{To formalise the randomised composite process $\X(t)$, we introduce the \emph{switching times} $0=\tau_0 < \tau_1 < \dots < \tau_M < \infty$, which form a partition of the time horizon of the process. For now, we assume the switching times to be deterministic. In \Cref{sec:StochSwitch}, stochastic switching times are studied.
In informal terms, the randomised composite process initiates at $\X(0)\in\R$ and follows the behaviour of $\Y{0}$ until $\tau_1$, then transitions to $\Y{1}$ dynamics until $\tau_2$, and so forth.}
Formally, we define a time shift which is used to appropriately concatenate the component processes. 
For every $j\in\{0,\dots,M\}$, the time shift $s_j(t)$ is defined by
	\begin{equation}\label{def:sj}
	s_j(t) :=
		\begin{cases}
		0, & t < \tau_j, \\
		t - \tau_j, & \tau_j \leq t < \tau_{j+1}, \\
		\tau_{j+1} - \tau_j, & \tau_{j+1} \leq t.	
		\end{cases}
	\end{equation}
Note that $s_M(t) = t - \tau_M$ for all $t \geq \tau_M$. {By summing over time-shifted component processes $\Y{j}(s_j(t))$, $j\in\{0, \dots, M\}$, we obtain the randomised composite process $\X(t)$ which follows the dynamics of $\Y{j}$ on the interval $[\tau_j, \tau_{j+1})$.}

	\begin{dfn}[Composite process]\label{def:Xprocess}
	For every $j\in\{0,\dots,M\}$, let $\Y{j}$ be the randomised component process given in \Cref{dfn:componentprocess}. 
	The \emph{randomised composite process} $\X(t)$, $t\geq 0$, is given by
		\begin{equation}\label{eq:generalX}
		\X(t) = x_0 + \sum_{j=0}^M \Y{j}(s_j(t)),
		\end{equation}
	with some initial value $x_0 \in \R$.
	\par
	By conditioning on a realisation $\bm\theta = (\theta_0, \dots, \theta_M) = (\vartheta_0(\omega^*), \dots, \vartheta_M(\omega^*)) = \bm\vartheta(\omega^*)$, we obtain the \emph{conditional composite process} $\Xreal(t)$, defined on the probability space $(\bar\Omega, (\bar\F_t)_{t\geq0}, \bar \P)$. It is given by
		\begin{equation}\label{eq:Xrealtoo}
		\Xreal(t) = x_0 + \sum_{j=0}^M \Yreal{j}(s_j(t)).
		\end{equation}
	\end{dfn}
In the following result, we represent the conditional composite process $\Xreal(t)$ directly as a jump-diffusion SDE.
	\begin{prop}[The conditional composite SDE]\label{prop:XrealSDE}
	Let $\bm \theta = (\theta_0, \dots, \theta_M) \in \R^{M+1}$ be a constant vector and let $\Xreal(t)$ be the associated conditional composite process defined in \Cref{def:Xprocess}. This process is the solution of the SDE
	\begin{equation}\label{eq:XrealSDE}
	\d \Xreal(t) = \beta(t;\bm\theta)\d t + \gamma(t;\bm\theta) \d \widetilde W(t) + \d \sum_{i=1}^{\widetilde P(t)}\pi(i) , 
	\quad \Xreal(0)=x_0,
	\end{equation}
	where $\widetilde W(t)$ is a standard Brownian motion, $\widetilde P(t)$ a Poisson process with intensity $\lambda\geq0$ and $\pi(i)$ is the (i.i.d.)\ magnitude of the $i$th jump, following a distribution with cdf $F_\eta$.
	The drift and volatility coefficients are given by
	\begin{equation}\label{eq:betagammadef}
	\beta(t; \bm\theta) := \sum_{j=0}^M b_j(s_j(t), \theta_j) \1_{t \in [\tau_j, \tau_{j+1})},
	\quad
	\gamma(t; \bm\theta) := \sum_{j=0}^M \sigma_j(s_j(t), \theta_j) \1_{t \in [\tau_j, \tau_{j+1})}.
	\end{equation}
	\end{prop}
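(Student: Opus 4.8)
The plan is to verify the SDE \eqref{eq:XrealSDE} directly from the definition \eqref{eq:Xrealtoo} of $\Xreal$, combining a pathwise analysis on each subinterval of the partition $\{\tau_0,\dots,\tau_M\}$ with a concatenation of the independent driving noises into single global drivers $\widetilde W$ and $\widetilde P$. Throughout I adopt the convention $\tau_{M+1} = \infty$, consistent with the remark that $s_M(t) = t - \tau_M$ for all $t\geq\tau_M$.

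First I would fix $k\in\{0,\dots,M\}$ and examine the summands of \eqref{eq:Xrealtoo} on $[\tau_k,\tau_{k+1})$. From the definition \eqref{def:sj} of the time shift, for $j > k$ one has $s_j(t)=0$, so $\Yreal{j}(s_j(t))=\Yreal{j}(0)=0$; for $j < k$ one has $s_j(t)=\tau_{j+1}-\tau_j$, a constant, so $\Yreal{j}(s_j(t))$ does not vary in $t$; and only for $j=k$ is the shift $s_k(t)=t-\tau_k$ genuinely time-dependent. Hence on this interval $\Xreal(t)=c_k+\Yreal{k}(t-\tau_k)$ with $c_k=x_0+\sum_{j=0}^{k-1}\Yreal{j}(\tau_{j+1}-\tau_j)$, and differentiating the integral representation \eqref{eq:Yreal} of $\Yreal{k}$ evaluated at $t-\tau_k$ yields a drift contribution $b_k(t-\tau_k,\theta_k)\d t = b_k(s_k(t),\theta_k)\d t$, a diffusion contribution $\sigma_k(s_k(t),\theta_k)\d W_k(t-\tau_k)$, and the jump contribution of $P_k$. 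Because the indicators $\1_{t\in[\tau_j,\tau_{j+1})}$ select exactly the $j=k$ term here, these agree with the coefficients $\beta(t;\bm\theta)$ and $\gamma(t;\bm\theta)$ of \eqref{eq:betagammadef} verbatim.

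The key construction is to splice the mutually independent drivers into single global processes. I would set, for $t\in[\tau_k,\tau_{k+1})$,
\begin{equation*}
\widetilde W(t) := \sum_{j=0}^{k-1} W_j(\tau_{j+1}-\tau_j) + W_k(t-\tau_k), \qquad \widetilde P(t) := \sum_{j=0}^{k-1} P_j(\tau_{j+1}-\tau_j) + P_k(t-\tau_k),
\end{equation*}
and take $(\pi(i))_{i\geq1}$ to be the concatenation, in order of occurrence, of the jump-magnitude sequences $(\eta_j(i))_i$. Matching left and right limits at each $\tau_{k+1}$ shows that both $\widetilde W$ and $\widetilde P$ are well defined and right-continuous across the switching times. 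By the mutual independence assumed in \Cref{dfn:componentprocess}, $\widetilde W$ has continuous paths with independent Gaussian increments of the correct variance, so Lévy's characterisation identifies it as a standard Brownian motion; likewise $\widetilde P$ has independent, Poisson-distributed increments of the correct mean, identifying it as a Poisson process of intensity $\lambda$, while the $\pi(i)$ remain i.i.d.\ with cdf $F_\eta$. With these drivers, the per-interval differential computed above reads $\d\Yreal{k}(t-\tau_k) = \beta(t;\bm\theta)\d t + \gamma(t;\bm\theta)\d\widetilde W(t) + \d\sum_{i=1}^{\widetilde P(t)}\pi(i)$ on $[\tau_k,\tau_{k+1})$.

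Finally I would glue the intervals together: since each summand $\Yreal{j}(s_j(t))$ is continuous and freezes at its terminal value $\Yreal{j}(\tau_{j+1}-\tau_j)$ once $t$ crosses $\tau_{j+1}$, the constant $c_{k+1}$ absorbs precisely the terminal value of the $k$th component, so $\Xreal$ is continuous at each $\tau_{k+1}$ and the locally valid equations concatenate into the single SDE \eqref{eq:XrealSDE} holding for all $t\geq0$. The main obstacle I anticipate is the concatenation step: one must check carefully that the pasted process $\widetilde W$ satisfies the hypotheses of Lévy's characterisation (in particular that variance accumulates at unit rate across the junctions and that independence of increments survives the splicing), and correspondingly that $\widetilde P$ retains stationary, independent Poisson increments. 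The determinism of the switching times renders these verifications routine, but it is there that the structural content of the statement resides.
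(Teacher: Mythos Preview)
Your proposal is correct and follows essentially the same route as the paper: both construct $\widetilde W(t)=\sum_{j}W_j(s_j(t))$ and $\widetilde P(t)=\sum_{j}P_j(s_j(t))$ by concatenating the independent component drivers across the deterministic partition and then verify the defining properties. The only cosmetic difference is that the paper confirms $\widetilde W$ is Brownian via a direct covariance computation rather than L\'evy's characterisation, and works globally via a change of variables rather than interval-by-interval before gluing.
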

This result follows directly from the \Cref{def:Xprocess} and its proof is given in \mbox{\ref{appx:XrealSDEproof}.}
\par
{In the given forms, the randomised processes $\X(t)$ and $\Y{j}(t)$, $j\in\{0,\dots,M\}$ are difficult to treat, whereas the conditional processes $\Xreal(t)$ and $\Yreal{j}(t)$ are {more} tangible. 
In \cite{GrzelakRand1} it is shown how certain features of the randomised processes, such as the probability density function (pdf) and the characteristic function (chf), can be obtained by integrating the known results for the conditional process against the probability measure induced by the respective randomisers.}
\par

{We exemplarily give the characteristic function of the randomised component process $\Y{j}(t)$ to illustrate this principle.}
	\begin{lem}[Characteristic function of the randomised component process]\label{lem:Ychf}
	For every $j\in\{0, \dots, M\}$ and time $t\geq 0$, let the characteristic function of $\Yreal{j}(t)$ be denoted by
		\begin{equation}\label{eq:Yrealchf}
		\varphi(u; \Yreal{j}(t)) := \E_0\left[\exp\left(iu\Yreal{j}(t)\right)\right].
		\end{equation}
	Then, the characteristic function of $\Y{j}(t)$ is given by
		\begin{align}
		\varphi(u; \Y{j}(t))
		:= \E \left[\exp\left(iu\Y{j}(t)\right) \right] 
		&= \int_{D_j} \varphi(u; Y^{\theta_j}_j(t)) \d F_{\vartheta_j}(\theta_j) \nonumber \\
		& {= \int_{D_j} \varphi(u; Y^{\theta_j}_j(t)) f_{\vartheta_j}(\theta_j) \d \theta_j,}
		\label{eq:Ychfgeneral1}
		\end{align}
	where $D_j$ denotes the domain of the random variable $\vartheta_j$, and $F_{\vartheta_j},\ {f_{\vartheta_j}(y)}$ its cumulative distribution function {and probability density function, respectively.}
	\end{lem}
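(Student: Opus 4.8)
The plan is to reduce the computation to a conditioning argument that exploits the product structure of the probability space $(\Omega, \G_t, \Q)$ introduced in the construction, namely $\Omega = \bar\Omega \times \Omega^*$ and $\Q = \bar\Q \otimes \Q^*$, together with the standing assumption of independence between the randomiser $\vartheta_j$ and the stochastic drivers $W_j$, $P_j$, $\eta_j$. Since $\vartheta_j$ depends only on the $\Omega^*$-coordinate while the drivers depend only on the $\bar\Omega$-coordinate, the defining expectation $\E[\exp(iu\Y{j}(t))]$ is an integral against a product measure, and I would evaluate it by conditioning on (equivalently, freezing) the randomiser and then averaging over its law.

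First I would observe that the integrand $\exp(iu\Y{j}(t))$ has modulus one, hence is bounded and $\Q$-integrable, so that Fubini's theorem applies without any further integrability estimate and permits writing the expectation as the iterated integral
\begin{equation*}
\E\left[\exp\left(iu\Y{j}(t)\right)\right] = \int_{\Omega^*}\left(\int_{\bar\Omega} \exp\left(iu\Y{j}(t)\right)\d\bar\Q\right)\d\Q^*.
\end{equation*}
Next, for a fixed outcome $\omega^*\in\Omega^*$ I would set $\theta_j = \vartheta_j(\omega^*)$ and note that, on the inner space $(\bar\Omega, \bar\F_t, \bar\Q)$, the expression $\Y{j}(t)$ coincides pathwise with the conditional component process $\Yreal{j}(t)=Y^{\theta_j}_j(t)$ of \Cref{dfn:componentprocess}: substituting the frozen value $\theta_j$ into the coefficients $b_j(u, \vartheta_j)$ and $\sigma_j(u, \vartheta_j)$ reproduces exactly \eqref{eq:Yreal}. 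Consequently the inner integral equals $\E_0\left[\exp\left(iu Y^{\theta_j}_j(t)\right)\right] = \varphi(u; Y^{\theta_j}_j(t))$, the conditional characteristic function defined in \eqref{eq:Yrealchf}.

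Finally I would rewrite the outer integral over $\Omega^*$ as an integral over the domain $D_j$ of $\vartheta_j$ via the image-measure (push-forward) formula, yielding $\int_{D_j}\varphi(u; Y^{\theta_j}_j(t))\d F_{\vartheta_j}(\theta_j)$, and then use the assumed absolute continuity of $\vartheta_j$, under which $\d F_{\vartheta_j}(\theta_j)=f_{\vartheta_j}(\theta_j)\d\theta_j$, to obtain the second representation in \eqref{eq:Ychfgeneral1}. For this last step one needs that $\theta_j\mapsto\varphi(u; Y^{\theta_j}_j(t))$ is measurable, which follows from the boundedness of $b_j$ and $\sigma_j$ and the standard measurable dependence of the conditional expectation on the parameter. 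I expect no genuine analytic difficulty here, precisely because the bounded characteristic exponent makes the interchange of integration automatic; the only point demanding care is the measure-theoretic bookkeeping on the product space, that is, invoking the independence of $\vartheta_j$ from the drivers so that freezing $\vartheta_j=\theta_j$ genuinely collapses the randomised process onto its conditional counterpart. This is exactly where the construction of $(\Omega, \G_t, \Q)$ as a product space enters.
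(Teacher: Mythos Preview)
Your proposal is correct and follows essentially the same approach as the paper: the paper's proof is a one-line appeal to the tower property of conditional expectation, and your Fubini/product-space argument is precisely that tower property made explicit on $(\bar\Omega\times\Omega^*,\bar\Q\otimes\Q^*)$, with the additional care you take on integrability, measurability and the push-forward step being justified but not required for the paper's level of detail.
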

	\begin{proof}
	The result follows immediately from the tower property of expectations,
		\begin{equation}
		\E \left[\exp\left(iu\Y{j}(t)\right) \right] 
		= \E \left[\E\left[\left.\exp\left(iu\Y{j}(t)\right) \right| \vartheta = \theta \right] \right]
		= \E \left[ \varphi(u; \Yreal{j}(t)) \right].
		\end{equation}
	\end{proof}
{A fast numerical approximation technique with a quantifiable error is provided by Gauss quadrature. This method results in a discretisation of the integral for which the algorithm of \cite{golub1969calculation} can compute the necessary weights and points based solely on the moments of the randomiser's probability distribution.}
	\begin{lem}[Discretisation of the characteristic function]\label{lem:Ychfquad}
		{For $j\in\{0,\dots,M\}$ let $N_j \in \N$ be the order of approximation. If the moments of the randomiser $\vartheta_j$ are finite for every $n_j \leq 2N_j$, $\E[\vartheta_j^{n_j}] < \infty$, then} the characteristic function {$\varphi(u; \Y{j}(t))$ is represented by the discretisation}
		\begin{equation}\label{eq:Ychfgeneral2}
		\varphi(u; \Y{j}(t))
		= \int_{D_j} f_{\vartheta_j}(\theta_j) \varphi(u; Y^{\theta_j}_j(t)) \d \theta_j 
		= \sum\limits_{n_j=1}^{N_j} w_{n_j}\,  \varphi(u; Y^{\theta_{n_j}}_j(t)) + \varepsilon_{N_j}(t, u). 
		\end{equation}
	Here, $(w_{n_j}, \theta_{n_j})_{n_j=1}^{N_j}$ are the Gauss-quadrature pairs associated with integration against the weight function $f_{\vartheta_j}(\theta_j)$, and $\varepsilon_{N_j}(t, u)$ is the quadrature approximation error, which is bounded by 
		\begin{equation}\label{eq:Ychfquaderror}
		\varepsilon_{N_j}(t, u) \leq \sup\limits_{\xi \in D_j} \frac{1}{(2N)!} \left.\frac{\partial^{2N}}{\partial \theta^{2N}} \varphi(u; Y^{\theta_j}_j(t)) \right|_{\theta = \xi} .
		\end{equation}
	\end{lem}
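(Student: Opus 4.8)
The plan is to read the integral in \eqref{eq:Ychfgeneral2}, which is inherited directly from \Cref{lem:Ychf}, as a weighted integral of the smooth integrand $g_j(\theta) := \varphi(u; Y^{\theta}_j(t))$ against the probability density $f_{\vartheta_j}$ over the domain $D_j$, and then to invoke the classical error theory of Gauss quadrature. First I would confirm that the quadrature rule exists. The stated moment condition $\E[\vartheta_j^{n_j}] < \infty$ for all $n_j \leq 2N_j$ guarantees that the inner products $\langle \theta^m, \theta^n \rangle = \int_{D_j} \theta^{m+n} f_{\vartheta_j}(\theta)\d\theta$ are finite for $m+n \leq 2N_j$, so the family of orthogonal polynomials with respect to the weight $f_{\vartheta_j}$ is well defined up to degree $N_j$. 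The nodes $\theta_{n_j}$ are then the zeros of the degree-$N_j$ orthogonal polynomial and the weights $w_{n_j}$ are the associated Christoffel numbers; both follow from the three-term recurrence coefficients --- equivalently, from the moments of $\vartheta_j$ --- via the \cite{golub1969calculation} eigenvalue procedure, which substantiates the computational claim in the statement.

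With the rule in hand, I would apply the standard Gauss quadrature remainder identity. For an integrand $g_j$ that is $2N_j$ times continuously differentiable on $D_j$, there exists $\xi \in D_j$ such that
\[
\int_{D_j} g_j(\theta) f_{\vartheta_j}(\theta)\d\theta - \sum_{n_j=1}^{N_j} w_{n_j}\, g_j(\theta_{n_j}) = \frac{1}{(2N_j)!}\, g_j^{(2N_j)}(\xi) \int_{D_j}\pi_{N_j}^2(\theta)\, f_{\vartheta_j}(\theta)\d\theta,
\]
where $\pi_{N_j}$ is the monic orthogonal polynomial of degree $N_j$, whose zeros coincide with the quadrature nodes. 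Identifying the left-hand side with $\varepsilon_{N_j}(t,u)$ and $g_j^{(2N_j)}$ with $\partial^{2N_j}_\theta\varphi(u; Y^{\theta}_j(t))$, the bound \eqref{eq:Ychfquaderror} then follows by replacing the derivative at $\xi$ by its supremum over $D_j$ and controlling the finite weighted integral $\int_{D_j}\pi_{N_j}^2 f_{\vartheta_j}\d\theta$.

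The main obstacle is to justify the smoothness hypothesis, i.e.\ that $g_j(\theta) = \varphi(u; Y^{\theta}_j(t))$ really is $2N_j$ times differentiable with a finite $(2N_j)$th derivative. Here I would exploit the explicit form of the characteristic function of the conditional component jump diffusion $\Yreal{j}(t)$: since $\eta_j$ and $P_j$ do not depend on $\theta$, the parameter enters $\varphi$ only through the drift integral $\int_0^t b_j(s,\theta)\d s$ and the Gaussian exponent $-\tfrac{1}{2}u^2\int_0^t \sigma_j^2(s,\theta)\d s$, with the compound-Poisson factor contributing a $\theta$-independent term. Assuming $b_j(s,\cdot)$ and $\sigma_j(s,\cdot)$ are sufficiently smooth --- which, together with the boundedness already imposed in \Cref{dfn:componentprocess}, keeps all derivatives finite --- repeated differentiation of this exponential representation (by Fa\`a di Bruno's formula) produces a bounded $(2N_j)$th derivative, ensuring the supremum in \eqref{eq:Ychfquaderror} is finite and the quadrature error is well controlled. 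This is precisely the step from which the exponential decay of $\varepsilon_{N_j}$ in $N_j$, claimed informally around \eqref{eq:introquad}, originates.
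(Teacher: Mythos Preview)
Your proposal is correct and follows essentially the same route as the paper, which does not give a detailed argument but simply notes that the result follows from an application of Theorem~2.1 in \cite{GrzelakRand1}. Your sketch is a self-contained unpacking of that reference: the moment condition guarantees existence of the orthogonal polynomials and hence of the Gauss rule (computable via \cite{golub1969calculation}), and the classical quadrature remainder formula together with smoothness of $\theta \mapsto \varphi(u; Y^{\theta}_j(t))$ yields the stated bound.
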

The proof of this quadrature discretisation result follows from an application of Theorem~2.1 in \cite{GrzelakRand1}. {In \ref{appx:RandomisedPDF}, we analogously obtain the pdf of the randomised component process $\Y{j}(t)$ in this way.} 
\par
{The discretization achieved through Gauss quadrature is significant for the applicability of this technique, as it links the random composite process to a finite number of concrete conditional component processes. This is valuable in applications like the process calibration, where each conditional process may benefit from an analytical form allowing rapid calibration.}
\par
{However, a limitation of this approach arises when dealing with nested expectation-type problems. Sampling of the randomisers immediately after the initial time complicates the inclusion of assets whose pricing model requires nested expectations with an inner expectation conditioned on a time after the initial one, as \cite{PiterbargHangover} observed.}
{This issue can be avoided through an alternative approach proposed by \cite{Brigo2000risk}. 
In this approach, a stochastic process $\Xlv(t)$ is defined that shares the marginal distributions of the randomised composite process $\X(t)$, but can be defined on the probability space $(\bar\Omega, \bar\F_t, \bar\Q)$ which is devoid of any additional randomisation features. 
The following section is dedicated to this formulation.}
%
\section{The local volatility model}\label{sec:SDEdet}
%
{The randomised model provides a versatile approach to modelling, is easily understandable, and incorporates a readily accessible discretisation scheme using the quadrature method. 
To address the limitation highlighted earlier, we now introduce a local volatility model for the composite process that is fully defined within the well-established framework of stochastic processes with deterministic parameters, while maintaining the marginal distributions obtained from the quadrature discretization of the randomised composite process. 
The results of this section are obtained under the assumption of deterministic switching times, the scenario involving stochastic switching times is treated in \Cref{sec:StochSwitch}.}
\par
This section is structured as follows. We first give the main result, which is the SDE of a local volatility process with jumps, denoted $\Xlv(t)$, 
{that has no added layer of randomisation. We show that the solution of this SDE has a probability density function of the same shape as is obtained from the randomised composite process $\X(t)$, which implies a parametrisation of the SDE such that the marginal densities of the local volatility model and the randomised composite process coincide up to the quadrature discretisation error of the randomised model.}
	\begin{thm}[Local volatility formulation of deterministic switching]\label{thm:detswitchSDEthm}
	Let $\theta_{n_j}\in\R$ and $w_{n_j}\in\R_+$, with $n_j \in \{1, \dots, N_j\}$, $N_j \in \N$, $j\in\{0, \dots, M\}$ be a collection of constants such that $\sum_{n_j=1}^{N_j} w_{n_j} = 1$. 
	{For notational convenience, we denote 
	$\thetaquad := (\theta_{n_0}, \dots, \theta_{n_M})$.
	}
	Consider the jump-diffusion SDE
	\begin{equation}\label{eq:XlocalvolSDE}
	\d\Xlv(t) = \bar\mu\left(t, \Xlv(t)\right) \d t + \bar\sigma\left(t, \Xlv(t)\right) \d \bar W(t) + {\d \sum_{i=1}^{\bar N(t)}\bar\eta(t)}, \quad \Xlv(0) = x_0,
	\end{equation}
	where $\bar W(t)$ is a standard Brownian motion, $\bar N(t)$ a Poisson process with intensity $\bar\lambda$, and $\bar \eta(t)$ denote the jump sizes which follow a distribution with cdf $F_{\bar\eta}$ and arrive according to the arrival times of the process $\bar N(t)$.
	Let the jump intensity be given by $\bar\lambda = \lambda$, the jump size distribution have cdf $F_{\bar\eta} = F_{\eta}$, 
	the drift coefficient and volatility coefficient be given, respectively, by
	\begin{align}\label{eq:musolution}
	\bar\mu(t,x) &= \frac{\sumquad \Wquad \beta(t; \thetaquad) \fXquad}{\sumquad \Wquad \fXquad}, 
	\\
	\label{eq:sigmasolution}
	\bar\sigma^2(t, x) &= \frac{\sumquad \Wquad \gamma^2(t; \thetaquad)  \fXquad}{\sumquad \Wquad \fXquad},
	\end{align}
	where $\fXquad$ {denotes} the density of the process $\Xquad(t)$, 
	{which is defined together with} the functions $\beta(t; \thetaquad)$ and $\gamma(t; \thetaquad)$ in \Cref{prop:XrealSDE}.
	\par
	Then the SDE \eqref{eq:XlocalvolSDE} has a {unique}, strong solution $\Xlv(t)$ with probability density function
	\begin{equation}\label{eq:fXlv}
	f(x; \Xlv(t)) = \sumquad \Wquad \fXquad.
	\end{equation}
	\end{thm}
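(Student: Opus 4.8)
The plan is to recognise the right-hand side of \eqref{eq:fXlv} as a mixture density and to show that it satisfies the forward (Kolmogorov) equation associated with the SDE \eqref{eq:XlocalvolSDE} precisely when the coefficients are given by \eqref{eq:musolution}--\eqref{eq:sigmasolution}. First I would write $g(t,x) := \sumquad \Wquad \fXquad$ for the proposed density, noting that since $\sum_{n_j=1}^{N_j} w_{n_j}=1$ for each $j$ the product weights $\Wquad$ sum to one, so $g(t,\cdot)$ is a genuine probability density (nonnegative, integrating to one) for every $t$. Each constituent $\fXquad$ is the density of the conditional composite process $\Xquad(t)$, which by \Cref{prop:XrealSDE} solves a jump-diffusion SDE with drift $\beta(t;\thetaquad)$, volatility $\gamma(t;\thetaquad)$, jump intensity $\lambda$ and jump cdf $F_\eta$; hence each $\fXquad$ satisfies the corresponding Kolmogorov forward equation (a partial integro-differential equation, the PIDE).

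Next I would form the linear combination of these forward equations with weights $\Wquad$. Because the diffusion coefficients $\beta(t;\thetaquad)$ and $\gamma^2(t;\thetaquad)$ depend on the quadrature index, the drift and diffusion terms do \emph{not} immediately collapse into a single-coefficient PIDE; they appear inside the sum as $\partial_x\bigl[\beta(t;\thetaquad)\,\fXquad\bigr]$ and $\tfrac12\partial_{xx}\bigl[\gamma^2(t;\thetaquad)\,\fXquad\bigr]$. The key algebraic step is to rewrite these weighted sums as $\partial_x\bigl[\bar\mu(t,x)\,g(t,x)\bigr]$ and $\tfrac12\partial_{xx}\bigl[\bar\sigma^2(t,x)\,g(t,x)\bigr]$, which holds exactly because of the definitions \eqref{eq:musolution}--\eqref{eq:sigmasolution}: multiplying $\bar\mu(t,x)$ by $g(t,x)$ cancels the denominator and returns the weighted numerator $\sumquad\Wquad\beta(t;\thetaquad)\fXquad$, and likewise for $\bar\sigma^2$. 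The jump term is simpler: since every constituent process shares the same intensity $\lambda=\bar\lambda$ and jump law $F_\eta=F_{\bar\eta}$, the integral operator is identical across the sum and factors out directly. Assembling these pieces shows that $g(t,x)$ solves the forward PIDE of \eqref{eq:XlocalvolSDE} with the stated coefficients, subject to the initial condition $g(0,\cdot)=\delta_{x_0}$ inherited from $\Xquad(0)=x_0$.

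Finally, to conclude that $g$ is the density of the solution $\Xlv$, I would invoke existence and uniqueness for the local volatility SDE: under the boundedness assumptions on $b_j,\sigma_j$ from \Cref{dfn:componentprocess} (which make $\beta,\gamma$ bounded and measurable) together with the standard Lipschitz/linear-growth regularity of the local coefficients $\bar\mu,\bar\sigma$, the SDE \eqref{eq:XlocalvolSDE} admits a unique strong solution whose marginal law is characterised by the forward PIDE; by uniqueness of the PIDE solution with the given initial data, that marginal density must coincide with $g$, giving \eqref{eq:fXlv}.

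\medskip
The main obstacle I anticipate is the regularity needed to make the strong-solution and uniqueness claims rigorous, rather than the formal PIDE verification. The local coefficients $\bar\mu(t,x)$ and $\bar\sigma(t,x)$ are defined as ratios whose denominator $\sumquad\Wquad\fXquad=g(t,x)$ can in principle vanish or become small where the mixture density has thin tails, so one must argue that these ratios remain well-defined and sufficiently regular (e.g.\ locally Lipschitz with at most linear growth) for the classical existence-uniqueness theory of jump-diffusion SDEs to apply. A secondary technical point is justifying that each $\fXquad$ is smooth enough to satisfy the forward PIDE in a classical (as opposed to merely distributional) sense, which relies on the nondegeneracy of $\gamma(t;\thetaquad)$ away from the switching boundaries; the piecewise structure of $\beta,\gamma$ through the indicators $\1_{t\in[\tau_j,\tau_{j+1})}$ means some care is required at the switching times $\tau_j$, where one would treat the evolution interval by interval and match densities across each $\tau_j$.
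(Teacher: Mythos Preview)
Your approach is essentially the paper's: verify that the mixture density $g(t,x)=\sumquad\Wquad\fXquad$ solves the forward (Fokker--Planck) PIDE of the local-volatility SDE by summing the constituent forward equations and using the definitions of $\bar\mu,\bar\sigma^2$ to collapse the drift and diffusion terms, with the common jump operator factoring out directly; the paper merely runs this in the reverse direction (postulating the density and solving for the coefficients, which introduces integration constants killed by decay at $\pm\infty$), but the content is identical. On your main anticipated obstacle, the paper handles it more simply than you fear: because $\bar\sigma^2(t,x)$ is a convex combination (with $x$-dependent weights) of the finitely many values $\gamma^2(t;\thetaquad)$, one has $\bar\sigma^2(t,x)\le \sup_{\thetaquad}\gamma^2(t;\thetaquad)$ and likewise $\bar\mu(t,x)\le\sup_{\thetaquad}\beta(t;\thetaquad)$, so the boundedness hypothesis on $b_j,\sigma_j$ gives uniform-in-time bounds and the growth conditions for the strong existence/uniqueness theorem without ever analysing the denominator's tails.
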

The proof of \Cref{thm:detswitchSDEthm} draws from an argument of identifying Fokker--Planck equations, which has been brought forth in similar form in \cite{Brigo2000risk} and \cite{GrzelakRand2}. The SDE \eqref{eq:XlocalvolSDE} implies a certain time-evolution of the density of $\Xlv(t)$, which can be related to the time-evolution of the density $\fXquad$ through the structure imposed onto the density $f(x; \Xlv(t))$ in \eqref{eq:fXlv}. Since the Fokker--Planck equation (FPE) for $\fXquad$ is explicitly known, its connection to the FPE of $f(x; \Xlv(t))$ then implies the specific parameter choices of $\bar\mu(t, x)$, $\bar\lambda$, $\bar\eta$ and $\bar\sigma^2(t,x)$.
\begin{proof}[Proof of \Cref{thm:detswitchSDEthm}]
Assume that a solution to the SDE in \eqref{eq:XlocalvolSDE} exists. Then, the SDE admits a Fokker--Planck equation for $\Xlv(t)$ (see, for example, Theorem~7.5 in \cite{Hanson2007} for a detailed treatise), given by
\begin{multline}\label{eq:XlvFPEformal}
\partial_t f_{\Xlv(t)} (x) 
= \frac12 \partial^2_{x^2} \left(\bar\sigma^2(t, x) f_{\Xlv(t)} (x)\right)
\\
- \partial_{x} \left( \bar\mu(t, x) f_{\Xlv(t)} (x) \right)
- \bar\lambda f_{\Xlv(t)} (x) 
+ \bar\lambda \int_{D_{\bar\eta}} f_{\bar\eta}(z) f_{\Xlv(t)} (x-z)\d z,
\end{multline}
where $f_{\bar\eta}$ is the density of the jump size distribution and $D_{\bar\eta}$ is its domain.
If the pdf of $\Xlv(t)$ is to be of the shape given in \eqref{eq:fXlv}, linearity implies that
\begin{align}\label{eq:linearitytrickRepeat}
\partial_t f(x; \Xlv(t))
&= \partial_t \left(\sumquad \Wquad \fXquad\right) \nonumber
\\
&= \sumquad \Wquad \partial_t \fXquad. 
\end{align}
The coefficients of $\Xquad(t)$ given in \Cref{prop:XrealSDE} only depend on the time $t$ but not the state $x$ of the process, and thus its FPE is explicitly given by
\begin{multline}\label{eq:XquadFPE}
\partial_t \fXquad 
= \frac12 \gamma^2(t; \thetaquad) \partial^2_{x^2} \fXquad
- \beta(t; \thetaquad) \partial_{x} \fXquad 
\\
- \lambda \fXquad 
+ \lambda \int_{D_\eta} f_\eta(z) f(x-z; \Xquad(t))\d z.
\end{multline}
Therefore, \eqref{eq:XlvFPEformal} and \eqref{eq:linearitytrickRepeat} are equal when the following system of equations holds: 
\begin{align}
\label{eq:volcom}
\frac12 \partial^2_{x^2} \left(\bar\sigma^2(t, x) f_{\Xlv(t)} (x)\right) 
&=
\frac12 \sumquad \Wquad \gamma^2(t; \thetaquad) \partial^2_{x^2} \fXquad,
\\
\label{eq:driftcom}
\partial_{x} \left( \bar\mu(t, x) f_{\Xlv(t)} (x) \right)
&= 
\sumquad \Wquad \beta(t; \thetaquad) \partial_{x} \fXquad,
\\[1em]
\label{eq:intenscom}
\bar\lambda f_{\Xlv(t)} (x)
&= 
\sumquad \Wquad \lambda \fXquad ,
\\[1em]
\label{eq:jumpsizecom}
\bar\lambda \int_{D_{\bar\eta}} f_{\bar\eta}(z) f_{\Xlv(t)} (x-z)\d z
&=
\lambda \int_{D_\eta} f_\eta(z) f(x-z;\Xquad(t))\d z.
\end{align}
A solution to \cref{eq:intenscom,eq:jumpsizecom} is immediately found in $\bar\lambda = \lambda$ and $\bar\eta = \eta$. 
Comparison of the remaining differential equations, and substituting the specific shape of $f(x; \Xlv(t))$ given in \eqref{eq:fXlv}, yields
\begin{multline}\label{eq:comparevola}
\bar\sigma^2(t, x) \sumquad \Wquad \fXquad
\\
= 
\sumquad \Wquad \gamma^2(t; \thetaquad) \fXquad
 + C_1(t)x + C_2(t)
\end{multline}
and
\begin{multline}\label{eq:comparedrift}
\bar\mu(t,x)\sumquad \Wquad \fXquad
\\
=
\sumquad \Wquad \beta(t; \thetaquad) \fXquad
+ C_3(t),
\end{multline}
for some time-dependent functions $C_i(t)$, $i=1,2,3$. Since these equations must vanish for $x\to\pm\infty$, one can immediately deduce that $C_i(t) = 0$, $\forall i$. Rearranging \eqref{eq:comparevola} and \eqref{eq:comparedrift} for $\bar\mu(t, x)$ and $\bar\sigma^2(t, x)$, respectively, yields the postulated solutions $\bar\mu(t,x)$ and $\bar\sigma^2(t,x)$ given in \eqref{eq:sigmasolution}.
\par
Finally, we confirm the existence of the {unique} solution $\Xlv(t)$ for all $t\geq 0$ following the conditions given in {Theorem~1.19 in \cite{Oksendal2007}. The local volatility formulation has not altered the jump component, thus preserving the validity of the conditions from the initial formulation. Therefore, we only need to consider the new drift and volatility coefficients for which we show a uniform boundedness criterion.}
It holds that
\begin{multline}
\bar\sigma^2(t, x) 
= \frac{\sumquad \Wquad \gamma^2(t; \thetaquad)  \fXquad}{\sumquad \Wquad \fXquad}
\\
\leq \sup_{\thetaquad} \gamma^2(t, \thetaquad) \frac{\sumquad\Wquad \fXquad}{\sumquad \Wquad \fXquad}
= \sup_{\thetaquad} \gamma^2(t, \thetaquad), 
\end{multline}
where the supremum is taken over all parameters $\theta_{n_0}, \dots, \theta_{n_M},$ with indices $n_j \in \{0, \dots, N_j\}$. 
Analogously, a bound for the drift term is found in $\sup_{\thetaquad} \beta(t, \thetaquad)$. By the definitions of $\beta$ and $\gamma$ and the boundedness assumption on the component process coefficients $b_j$ and $\sigma_j$, these suprema are uniformly bounded in time, 
\begin{equation}
\max\left(\sup_{\thetaquad} \beta(t, \thetaquad), \sup_{\thetaquad} \gamma^2(t, \thetaquad)\right) \leq K \in \R,
\end{equation}
therefore the Lipschitz and linear growth condition of the SDE are satisfied and the {unique} solution exists.

\end{proof}
{In the next result, we show that the probability density function of the local volatility process $\Xlv(t)$ in \eqref{eq:fXlv} is intrinsically linked to the randomised composite process $\X(t)$. To this end, we first obtain the pdf $f(x; \X(t))$ in its exact form, and then derive the quadrature discretisation which coincides with \eqref{eq:fXlv}.
}
	\begin{thm}[Density interpretation]\label{thm:DensityInterpretation}
	Let $\X(t)$ be the randomised composite process given in \Cref{def:Xprocess}.
		\begin{enumerate}[i.]
		\item The probability density function of $\X(t)$ is given by 
			\begin{align}\label{eq:Xdensi}
			 f(x; {\X(t)})
			= \Bigl(\delta_{x_0} \ast f_{\Y{0}(s_0(t))} \ast \cdots \ast f_{\Y{M}(s_M(t))}\Bigr)(x)
			\end{align}
		where $\delta_{x_0}(x)$ is the translated Dirac delta function, which is zero everywhere except at $x_0$, and $f_{\Y{j}(s_j(t))}(y) := f(y; \Y{j}(s_j(t)))$ for $j\in\{0,\dots, M\}$ as given in \Cref{cor:Ydensity}.
		\item
		If, for every $j\in\{0, \dots, M\}$, the random variable $\vartheta_j$ possesses finite first $2N_j\in \N$ moments, $\E[\vartheta_j^{2N_j}] < \infty$, then the quadrature approximation of the density $f_{\X(t)}(x)$ is given by
			\begin{align}
			f(x;{\X(t)})
			&= \sumquad \Wquad f\Bigl(x; x_0 + \sum_{j=0}^M \Yquad{n}{j}(s_j(t))\Bigr) + \bar\varepsilon(t)
			\nonumber \\
			&= \sumquad \Wquad \fXquad + \bar\varepsilon(t)
			\label{eq:quadratureinterpretationSum}
			\end{align}
		where for every $j\in\{0, \dots, M\}$, $(w_{n_j}, \theta_{n_j})_{n_j=1}^{N_j}$ are the Gauss quadrature pairs associated with the random variable $\vartheta_j$, and the error term $\bar\varepsilon(t)$ can be bounded by
			\begin{equation}\label{eq:compositequaderrorbound}
			\bar\varepsilon(t) \leq \sum_{j=0}^M \sup\limits_{y\in\R} \hat\varepsilon_{N_j}(t, y),
			\end{equation}
		with $\hat\varepsilon_{N_j}(t, y)$ the discretisation error {for which a bound is} given in \eqref{eq:Ychfquaderror}.
		\end{enumerate}
	\end{thm}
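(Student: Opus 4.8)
The plan is to prove the two parts in sequence, with Part~i supplying the convolution structure that Part~ii then discretises. For Part~i, I would rely on the independence of the component processes noted after \Cref{dfn:componentprocess}: since the randomisers $\vartheta_j$ are mutually independent and the driving noises $W_j$, $P_j$ and jump magnitudes $\eta_j$ are independent across $j$, the time-shifted randomised components $\Y{0}(s_0(t)), \dots, \Y{M}(s_M(t))$ are mutually independent at any fixed $t$. The density of a sum of independent random variables is the convolution of their densities, so the density of $\sum_{j=0}^M \Y{j}(s_j(t))$ is the $(M{+}1)$-fold convolution of the $f_{\Y{j}(s_j(t))}$, and the deterministic translation by $x_0$ amounts to convolving with the shifted Dirac delta $\delta_{x_0}$. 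This yields \eqref{eq:Xdensi}, with the component densities supplied by \Cref{cor:Ydensity}.

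For Part~ii, I would substitute into this convolution the Gauss-quadrature approximation of each component density. In analogy to the characteristic-function discretisation in \Cref{lem:Ychfquad}, each component density admits the representation
\[
f(y; \Y{j}(s_j(t))) = \sum_{n_j=1}^{N_j} w_{n_j}\, f\bigl(y; \Yquad{n}{j}(s_j(t))\bigr) + \hat\varepsilon_{N_j}(t, y),
\]
with quadrature pairs associated with $\vartheta_j$ and $\hat\varepsilon_{N_j}$ controlled as in \eqref{eq:Ychfquaderror}. Writing each factor as a quadrature sum $g_j$ plus an error $e_j := \hat\varepsilon_{N_j}(t,\cdot)$ and using the multilinearity of convolution, the leading term --- the convolution $\delta_{x_0}\ast g_0 \ast \cdots \ast g_M$ --- expands into the multi-index sum $\sumquad \Wquad \fXquad$. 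Here I would invoke Part~i once more, now applied to the conditional processes: for each fixed realisation $\thetaquad$, the convolution $\delta_{x_0}\ast f(\cdot;Y^{\theta_{n_0}}_0(s_0(t)))\ast\cdots\ast f(\cdot;Y^{\theta_{n_M}}_M(s_M(t)))$ is precisely the density $\fXquad$ of the conditional composite process, so the expanded sum reassembles into \eqref{eq:quadratureinterpretationSum}.

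The hard part will be showing that the residual $\bar\varepsilon(t)$ obeys the additive bound \eqref{eq:compositequaderrorbound} rather than one that degrades multiplicatively in the component errors. I would handle this with a telescoping decomposition: setting $h_j := f_{\Y{j}(s_j(t))} = g_j + e_j$, the difference $\delta_{x_0}\ast h_0\ast\cdots\ast h_M - \delta_{x_0}\ast g_0\ast\cdots\ast g_M$ splits into a sum over $k\in\{0,\dots,M\}$ of terms of the form
\[
\delta_{x_0}\ast h_0 \ast \cdots \ast h_{k-1} \ast e_k \ast g_{k+1} \ast \cdots \ast g_M .
\]
The key observation is that every $g_j$ is itself a probability density: its weights are positive and sum to one by hypothesis, and each conditional density $f(\cdot;\Yquad{n}{j}(s_j(t)))$ integrates to one; the $h_j$ are densities by construction. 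Since $\|p\ast h\|_\infty \leq \|h\|_\infty$ whenever $p$ is a probability density, and convolution with $\delta_{x_0}$ is a norm-preserving shift, each such summand is bounded in supremum norm by $\sup_{y\in\R}\hat\varepsilon_{N_k}(t,y)$. Summing over $k$ then gives exactly the bound \eqref{eq:compositequaderrorbound}, completing the argument.
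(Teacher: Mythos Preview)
Your proof is correct and, for Part~ii, takes a genuinely different route from the paper's. For Part~i the paper proceeds via characteristic functions---it factors $\varphi(u;\X(t))$ by independence and then applies the Fourier convolution theorem---whereas you invoke the convolution-of-densities fact directly; these are equivalent. The real divergence is in the error analysis. The paper expands the product $\prod_j(g_j+e_j)$ fully into three classes of terms (no error factors, one error factor, two or more), bounds the single-error terms as you do, and then invokes a separate proposition (\Cref{prop:quaderrorintegral}) showing $\int_\R \hat\varepsilon_{N_j}(t,y)\,\d y=0$ to annihilate every cross term with two or more error factors. Your telescoping decomposition sidesteps this entirely: each summand contains exactly one $e_k$ flanked by genuine probability densities, so the $L^\infty$-contraction $\|p\ast h\|_\infty\le\|h\|_\infty$ yields the additive bound without ever needing the zero-integral property of the quadrature error. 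Your argument is shorter and more robust; the paper's approach has the mild advantage of isolating the structural fact that higher-order error interactions vanish exactly rather than merely being controlled. One small point: you assert the quadrature weights are positive ``by hypothesis,'' but this is not stated in \Cref{thm:DensityInterpretation}; it is, however, a standard property of Gauss quadrature against a nonnegative weight function, so the claim stands.
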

{The proof of this result is given in \ref{appx:RandomisedPDF}.}
From the theorem, it follows that choosing the parameters $(w_{n_j}, \theta_{n_j})$, $n_j \in \{1, \dots, N_j\}$, $N_j \in \N$, $j\in\{0, \dots, M\}$ in \Cref{thm:detswitchSDEthm} as the quadrature weights and points of the randomiser sequence $(\vartheta_0, \dots, \vartheta_M)$,
{the solution to the SDE \eqref{eq:XlocalvolSDE} corresponds to the quadrature density of the randomised composite process $\X$, up to the discretisation error $\bar\varepsilon(t)$.}
\par
{The characteristic function of the local-volatility model $\Xlv(t)$ immediately follows from the previous results. It will be used later in \Cref{sec:numerics} for asset pricing experiments.
\begin{coroll}\label{cor:Xlvchf}
The characteristic function of $\Xlv(t)$ is given for every $u \in \R$ and $t \geq 0$ by
\begin{equation}
\varphi(u; \Xlv(t)) = \e^{iux_0} \prod_{j=0}^M \sum_{n_j = 1}^{N_j} w_{n_j} \varphi(u; \Yquad{n}{j}(s_j(t)))
\end{equation}
\end{coroll}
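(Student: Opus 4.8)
The plan is to read the characteristic function directly off the explicit density of $\Xlv(t)$ supplied by \Cref{thm:detswitchSDEthm}, namely $f(x; \Xlv(t)) = \sumquad \Wquad \fXquad$, and then to exploit the product structure of the underlying conditional composite processes. First I would apply the defining Fourier integral $\varphi(u; \Xlv(t)) = \int_\R \e^{iux} f(x; \Xlv(t))\,\dx$. Because the density is a \emph{finite} mixture, linearity of the integral lets me interchange integration with the finite sum $\sumquad$ without any convergence concern, giving
\begin{equation}
\varphi(u; \Xlv(t)) = \sumquad \Wquad \int_\R \e^{iux}\, \fXquad \,\dx = \sumquad \Wquad \, \varphi(u; \Xquad(t)).
\end{equation}
Thus the characteristic function of $\Xlv(t)$ is the same weighted mixture of the characteristic functions of the conditional composite processes $\Xquad(t)$ as its density is of their densities.

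The second step is to evaluate $\varphi(u; \Xquad(t))$. By \Cref{def:Xprocess} the conditional composite process admits the additive representation $\Xquad(t) = x_0 + \sum_{j=0}^M \Yquad{n}{j}(s_j(t))$, and the component processes are mutually independent, since their driving Brownian motions $W_j$, Poisson processes $P_j$ and jump magnitudes $\eta_j$ are assumed mutually independent in \Cref{dfn:componentprocess}. Independence makes the expectation of the exponential factorise across the regime index $j$, so that
\begin{equation}
\varphi(u; \Xquad(t)) = \e^{iux_0}\, \E_0\!\left[\exp\!\left(iu \sum_{j=0}^M \Yquad{n}{j}(s_j(t))\right)\right] = \e^{iux_0} \prod_{j=0}^M \varphi(u; \Yquad{n}{j}(s_j(t))).
\end{equation}

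Substituting this into the mixture representation and pulling the $n$-independent factor $\e^{iux_0}$ out of the sum, the remaining task is purely combinatorial. The nested sum $\sumquad$ ranges over the multi-index $(n_0, \dots, n_M)$ with $n_j \in \{1, \dots, N_j\}$, and each summand is the product $\prod_{j=0}^M w_{n_j}\varphi(u; \Yquad{n}{j}(s_j(t)))$ whose $j$-th factor depends only on $n_j$, since $\Wquad = \prod_{j=0}^M w_{n_j}$. The distributive law then converts this sum of products into a product of one-dimensional sums,
\begin{equation}
\sumquad \prod_{j=0}^M w_{n_j}\varphi(u; \Yquad{n}{j}(s_j(t))) = \prod_{j=0}^M \sum_{n_j=1}^{N_j} w_{n_j}\varphi(u; \Yquad{n}{j}(s_j(t))),
\end{equation}
which is exactly the claimed formula. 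None of the steps is genuinely difficult; the only point requiring care is this last index bookkeeping, where one must check that the regime-wise factorisation of $\varphi(u; \Xquad(t))$ lines up precisely with the product weights $\Wquad$ so that the multi-index sum collapses correctly into a product of the individual quadrature sums.
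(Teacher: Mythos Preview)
Your proposal is correct and follows essentially the same route as the paper: starting from the mixture density \eqref{eq:fXlv}, applying the Fourier transform termwise, factorising $\varphi(u;\Xquad(t))$ via independence of the components, and then collapsing the multi-index sum into the product of one-dimensional quadrature sums. The paper phrases the factorisation step as an appeal to the convolution theorem (analogous to \eqref{eq:convolutiontheoremstep}), whereas you phrase it as ``independence makes the expectation factorise''; these are the same argument.
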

\begin{proof}Given the density \eqref{eq:fXlv} of $\Xlv(t)$ in \Cref{thm:detswitchSDEthm}, the result is a consequence of the convolution theorem of (inverse) Fourier transformation, analogous to \eqref{eq:convolutiontheoremstep}.
\end{proof}
}
\par
In summary, within the context of deterministic switching, we have presented a jump-diffusion {with deterministic coefficients, and the parametrisation under which} its pdf mimics that of the randomised composite process $\X(t)$.
\section{Stochastic switching times}\label{sec:StochSwitch}
In the following, we extend the composite randomisation framework by stochastic switching, departing from deterministic switching times in earlier sections.
More specifically, we introduce positive random variables to represent the time durations during which the component processes control the overall composition.
{Within this section, two models involving stochastic switches are introduced.}
At first, we consider the scenario involving a constant number of switches. That is, the timing of each switch is stochastic, while the overall number of switches remains fixed. 
Based on the established quadrature methodology, we derive a local volatility model for this setting and find the characteristic function of the corresponding composite process.
Building upon this, we progress to a fully stochastic switching setup, in which the number of switches may also be random and obtain the characteristic function of the local volatility model associated with this setting.
\par
Previously, the composite model was characterised by deterministic switching times $\tau_j$, $j\in\{1, \dots, M\}$. 
We now choose a slightly different approach in which we represent each \emph{sojourn time}, formerly given by $\zeta_j = \tau_{j}-\tau_{j-1}$ for $j\in\{1, \dots, M\}$, as a random variable and derive the switching times $\tau_j$ from there. 
Each sojourn time represents the period in which a specific component process influences the overall composition.
\par
In terms of the underlying probabilistic model, we augment the probability space of the randomiser $(\Omega^*, \mathcal{A}, \Q^*)$ to additionally support a positive, real-valued random vector $\bm\zeta := (\zeta_1, \dots, \zeta_{M})\in\R_+^M$, where the random variable $\zeta_j$ represents the duration of the sojourn time of the component $\Y{j-1}$, $j\in\{1, \dots, M\}$. With a total of $M$ switches, the sojourn time of the last component $\Y{M}$  is determined implicitly by the preceding sojourn times and the time $t$ at which the model is observed.
\par
We assume that the stochastic sojourn times $\zeta_1, \dots, \zeta_{M}$ are pairwise independent, as well as independent of the (parameter) randomiser $\bm\vartheta$.
Then, the randomiser probability space can be partitioned into a product space with $\Omega^* = \Omega^*_1\times \Omega_2^*$, $\mathcal{A}=\mathcal{A}_1\otimes\mathcal{A}_2$, and $\Q^* = \Q^*_1 \otimes \Q^*_2$. The parameter randomiser $\bm\vartheta$ is defined on $(\Omega^*_1, \mathcal{A}_1, \Q^*_1)$, similarly to the construction in \Cref{sec:randomisationSetting}, and the sojourn times $\bm\zeta$ on $(\Omega^*_2, \mathcal{A}_2, \Q^*_2)$.
\par
Under this modified construction, the randomised component processes $Y_j(t)$ for $j \in \{0, \dots, M\}$ retain their definition as given in \Cref{dfn:componentprocess}.
However, we need to adjust the definition of the randomised composite process \Cref{def:Xprocess}. 
Specifically, the time shifts $s^{\bm \zeta}_j(t)$ are tailored to accommodate the added dependency on the stochastic sojourn times.
\begin{dfn}[The randomised composite process with $M$ stochastic switching times]\label{def:Xprocessstoch}
For some fixed number $M\in\N$, let $\Y{j}(t)$, $j\in\{0, \dots, M\}$ be a collection of randomised component processes and let $\bm\zeta = (\zeta_1, \dots, \zeta_M)$ be the independent, stochastic sojourn times of these components. For every $j\geq 1$, we define the stochastic switching time by
\begin{equation}\label{eq:stochswitchingtimes}
\pi_j := \sum_{k=1}^{j} \zeta_k,
\end{equation}
and set $\pi_0 := 0$. We further define the time shifts $s^{\bm\zeta}_j(t)$ by
\begin{equation}
s^{\bm\zeta}_j(t) := 
\begin{cases}
0, & t < \pi_j,\\
t - \pi_j, & t \in[\pi_j, \pi_{j+1}),\\
\zeta_{j+1}, & t \geq \pi_{j+1},
\end{cases}
\end{equation}
for $j\in\{0, \dots, M-1\}$. For the final component $\Y{M}$, the time shift is defined as $s^{\bm\zeta}_M(t) := (t - \pi_M)\1_{t \geq \pi_M}$.
Under this updated notion of time shifts, we thus define the \emph{randomised composite process with $M$ stochastic switching times} as
\begin{equation}
\Xzeta(t) := x_0 + \sum_{j=0}^M \Y{j}(s^{\bm\zeta}_j(t)).
\end{equation}
\end{dfn}
Notably, when the random vector $\bm\zeta$ is substituted by a realisation $\bm z$, this definition coincides with that of the randomised composite process with deterministic switches given by that realisation.
\par
We formalise this idea through the notion of a \emph{sojourn-time-conditioned} process. 
For a sample $\omega^*_2\in\Omega^*_2$, let $\bm z =(z_1, \dots, z_{M}) := (\zeta_1(\omega^*_2), \dots, \zeta_{M}(\omega^*_2))$ be a realisation of the sojourn times.
For every $j\in\{0, \dots, M\}$, let 
$\tau_j := \pi_j(\omega^*_2) = \sum_{\ell=1}^j z_j$ denote the corresponding realisation of the $j$th switching time.
\par
For every realisation $\bm z$, the associated time shift realisation $s^{\bm z}_j$, $j\in\{0, \dots, M\}$ is given by 
	\begin{equation}\label{eq:stochsj}
	s_j^{\bm z}(t) =
		\begin{cases}
		0, & t < \tau_j \\
		t - \tau_j, & \tau_j \leq t < \tau_{j+1}, \\
		z_j = \tau_{j+1} - \tau_j, 
		& \tau_{j+1} \leq t,
		\end{cases}
	\end{equation}
and $s_M^{\bm z}(t) = t - \tau_M$ for all $t \geq \tau_M$.
This gives rise to a sojourn-time-conditioned version of the composite process, 
	\begin{equation}
	\Xzetareal(t) = x_0 + \sum_{j=0}^M \Y{j}(s_j^{\bm z}(t)).
	\end{equation}
With the sojourn time realisations $\bm z$ prescribing a choice of deterministic switching times $\tau_j$, the thus conditioned process $\Xzetareal(t)$ is a randomised composite process with deterministic switching times exactly as defined in \Cref{def:Xprocess}, with its associated local volatility model given in \Cref{thm:detswitchSDEthm}.
\par
This interpretation of deterministic switching as a realisation of stochastic switching allows us to employ the methodology introduced in earlier sections.
Accordingly, the pdf of $\Xzeta(t)$ is obtained from the integral 
\begin{align}\label{eq:Xzetaquad1}
	f(u; \Xzeta(t)) 
	&:= \int_{\R_{+}^M} f\left(u; \Xzetareal(t)\right) \d F_{\bm\zeta^M}(\bm z) \nonumber \\
	&= \int_{\R_+^M} f_{\bm \zeta^M}(\bm z) f(u; \Xzetareal(t)) \d \bm z,
	\end{align}
where $F_{\bm\zeta^M}$ and $f_{\bm\zeta^M}$ are the cdf and pdf of the joint distribution of sojourn times with the additional condition that there have been exactly $M$ switches by time $t$. This distribution is equivalently given by
\begin{equation}
\bm\zeta^M := (\zeta_1, \dots, \zeta_M) \large\mid \sum_{\ell=1}^{M} \zeta_\ell < t.
\end{equation}
To efficiently apply the quadrature method to \eqref{eq:Xzetaquad1}, we seek to factorise the joint density $f_{\bm \zeta^M}(z)$ into marginal densities because the performance of the quadrature method tends to decline as the dimensionality increases.
\par
We address this with a scheme centred on successive conditioning. The condition $\sum_{\ell=1}^M \zeta_\ell < t$ can be expressed as the intersection over the events 
\begin{equation}\label{eq:succevents}
\Bigl\{\sum_{\ell=1}^M \zeta_\ell < t\Bigr\} = 
\left\{\zeta_0 < t\right\} \cap \left\{\zeta_1 < t - \zeta_0\right\} \cap 
\cdots \cap \left\{\zeta_M < t - \zeta_0 + \dots \zeta_{M-1} \right\}.
\end{equation}
Notably, each event $\mathcal{Z}_j(t) := \{\zeta_j < t - \sum_{\ell = 0}^{j-1} \zeta_\ell \}$ only depends on `preceding' sojourn times $\zeta_\ell$ where $\ell < j$. 
By substituting \eqref{eq:succevents} in the definition of the joint distribution $\bm\zeta^M$, we obtain a decomposition into marginal distributions,
\begin{equation}\label{eq:zetaM}
f_{\bm\zeta^M}(\bm z) = f_{\zeta_1\mid\mathcal{Z}_1(t)}(z_1)  f_{\zeta_2\mid\mathcal{Z}_2(t)}(z_2)\cdots f_{\zeta_M\mid\mathcal{Z}_M(t)}(z_M).
\end{equation}
This form, as we now show, allows us to compute the densities when considered in this particular order, starting with $f_{\zeta_1\mid\mathcal{Z}_1(t)}(z_1)$.
\par
In the following, denote the right-truncated distribution of the continuous random variable $\zeta_j$ in point $b\in\R$ by $R(\zeta_j, b)$. The density of such a truncated distribution is well known to be given by $f_{R(\zeta_j, b)}(z) = f_{\zeta_j}(x) / F_{\zeta_j}(b)$.
By definition, it holds that $\mathcal{Z}_1(t) = \{\zeta_1 < t\}$, and therefore 
\begin{equation}
\zeta_1\mid\mathcal{Z}_1(t) =  R(\zeta_1, t)
\end{equation}
is a right-truncated distribution of which we know the density $f_{\zeta_j}(x) / F_{\zeta_j}(t)$ and thus can compute moments of $\zeta_1\mid\mathcal{Z}_1(t)$. 
From the first $2L_1$ moments of the distribution, the Golub-Welsch algorithm yields the $L_1\in\N$ quadrature pairs $(v_{\ell_1}, z_{\ell_1})_{\ell_1 = 1}^{L_1}$ corresponding to $\zeta_1\mid\mathcal{Z}_1(t)$.
\par
Furthermore, for every quadrature points $z_{\ell_1}$, the resulting distribution $\zeta_2 \mid \zeta_2 < t - z_{\ell_1} = R(\zeta_2, t - z_{\ell_1})$ is in right-truncated form and we can repeat the process to obtain the $L_2\in\N$ quadrature pairs $(v_{\ell_2}^{\ell_1}, z_{\ell_2}^{\ell_1})_{\ell_2 = 1}^{L_2}$ associated with $\zeta_2\mid\mathcal{Z}_2(t)$ and the respective outcome $z_{\ell_1}$ of the first sojourn time. In total, $L_1$ distinct sets of $L_2$ quadrature nodes are obtained.
\par
For each of these quadrature points $z_{\ell_2}^{\ell_1}$, we can repeat the procedure to obtain quadrature pairs $(v_{\ell_3}^{\ell_1, \ell_2}, z_{\ell_3}^{\ell_1, \ell_2})_{\ell_3 = 1}^{L_3}$ associated with the third sojourn time conditioned on $\{\zeta_3 < t - z_{\ell_1} - z_{\ell_2}^{\ell_1}\}$. 
\par
Indeed, for any $j\in\{2, \dots, M\}$ and selected combination of quadrature points $(z_{\ell_1}, z^{\ell_1}_{\ell_2}, \dots, z^{\ell_1,\dots, \ell_{j-2}}_{\ell_{j-1}})$, the procedure yields a right-truncated distribution 
\begin{equation*}
\zeta_{j} \mid \zeta_j < t-\sum_{k=1}^{j-1}z^{\ell_1, \dots, \ell_{k-1}}_{\ell_k}
= R\left(\zeta_j; t-\sum_{k=1}^{j-1}z^{\ell_1, \dots, \ell_{k-1}}_{\ell_k}\right),
\end{equation*}
for which we compute the moments. From these, we derive the associated $L_j\in\N$ quadrature pairs $(v^{\ell_1, \dots, \ell_{j-1}}_{\ell_j}, z^{\ell_1, \dots, \ell_{j-1}}_{\ell_j})_{\ell_j = 1}^{L_j}$.
\par
The method results in $\prod_{j=1}^M L_j$ different sets of quadrature nodes $\{z_{\ell_1}, \dots,  z^{\ell_1, \dots, \ell_{M-1}}_{\ell_M}\}$. Each set defines $M$ realised switching times which are given analogously to the relation \eqref{eq:stochswitchingtimes} by
\begin{equation}
\tau^{\ell_1, \dots, \ell_{j-1}}_{\ell_j} = \sum_{k=1}^{j} z^{\ell_1, \dots, \ell_{k-1}}_{\ell_k}, \quad j=1,\dots,M.
\end{equation}%
Thus, applying the decomposition \eqref{eq:zetaM} to the integral in \eqref{eq:Xzetaquad1}, we can construct a quadrature form of the stochastic sojourn time model in terms of particular realisations of the deterministic switching model. The quadrature form is given by 
\begin{align}
	f(x; \Xzeta(t)) 
	&=  \int_{\R_+^M} \prod_{\ell=1}^{M} f_{\zeta_\ell\mid\mathcal{Z}_\ell}(z_\ell) f(x; \Xzetareal(t)) \d \bm z \nonumber
	\\
	&\approx \sum\limits_{{\ell_1, \dots, \ell_M}{=1}}^{L_1, \dots, L_M} \left(v_{\ell_1}v^{\ell_1}_{\ell_2}\cdots v^{\ell_1, \dots, \ell_{M-1}}_{\ell_M} \right)
	f\left(x; X^{(z_{\ell_1},z^{\ell_1}_{\ell_2}, \dots, z^{\ell_1, \dots, \ell_{M-1}}_{\ell_M}), \bm\vartheta}(t)\right) \nonumber
	\\
	&=: \sumquadL \Vquad f(x; \Xzetaquad(t)),
	\end{align}
with quadrature weights $\Vquad = \prod_{\ell_j\in|\ell|} v^{\ell_1, \dots, \ell_j-1}_{\ell_j}$.
Each density $f(x; \Xzetaquad(t))$ is the density of a randomised composite process with $M$ deterministic switches given by $\bm\tau_{\smallvert\bm\ell\smallvert} := (z_{\ell_1}, z_{\ell_1} + z^{\ell_1}_{\ell_2}, \dots, z_{\ell_1} + \dots + z^{\ell_1, \dots, \ell_{M-1}}_{\ell_M})$. For every composite process with deterministic switches, applying the quadrature approximation in \Cref{thm:DensityInterpretation} yields
	\begin{equation}
	\sum_{\smallvert\ell\smallvert = \bm 1}^{\smallvert \bm L\smallvert} \Vquad f(x; \Xzetaquad(t))
	\approx \sum_{\smallvert\ell\smallvert = \bm 1}^{\smallvert \bm L\smallvert} \Vquad \sumquad \Wquad f(x; \Xzetaquadquad(t)).
	\end{equation}
%
\par
%
The main result of this section is the stochastic switching equivalent to \Cref{thm:detswitchSDEthm}, an SDE of local-volatility type which may be defined in the framework of the probability space $(\bar\Omega, \bar\F, \bar\P)$. {The solution of this SDE exhibits marginal distributions that align with those of the  quadrature discretisation of the randomised composite process.}
\begin{thm}[Local volatility formulation of stochastic switching]\label{thm:stochswitchSDEthm}
Let $\theta_{n_j}\in\R$ and $w_{n_j}\in\R_+$, with $n_j \in \{1, \dots, N_j\}$, $N_j \in \N$, $j\in\{0, \dots, M\}$ be a collection of constants such that $\sum_{n_j=1}^{N_j} w_{n_j} = 1$.
\\
For $k\in\{1, \dots, M\}$, $L_k\in\N$, let 
\begin{equation}
z_{\ell_1}, \dots, z_{L_1}, z^{\ell_1}_{\ell_2}, \dots, z^{\ell_1}_{L_2}, \dots \dots, z^{\ell_1, \dots, \ell_{M-1}}_{\ell_M}, \dots, z^{\ell_1, \dots, \ell_{M-1}}_{L_M} \in \R^+
\end{equation}
and
\begin{equation}
v_{\ell_1}, \dots, v_{L_1}, v^{\ell_1}_{\ell_2}, \dots, v^{\ell_1}_{L_2}, \dots \dots, v^{\ell_1, \dots, \ell_{M-1}}_{\ell_M}, \dots, v^{\ell_1, \dots, \ell_{M-1}}_{L_M} \in \R^+
\end{equation}
be collections of constants such that $\sum_{\ell_j=1}^{L_j} v^{\ell_1, \dots, \ell_{j-1}}_{\ell_j} = 1$ for all $j\in\{1, \dots, M\}$.
Consider the jump-diffusion SDE
	\begin{equation}\label{eq:XlocalvolstochSDE}
	\d\Xstoch_M(t) = \widehat\mu_M\left(t, \Xstoch(t)\right) \d t + \widehat\sigma_M\left(t, \Xstoch(t)\right) \d \widehat W(t) + \widehat\eta(t) \d \widehat N(t), \quad \Xstoch(0) = x_0,
	\end{equation}
where $\widehat W(t)$ is a standard Brownian motion, $\widehat N(t)$ a Poisson process with intensity $\widehat\lambda$, and $\widehat \eta(t)$ denote the jump sizes which follow the distribution $F_{\widehat\eta}$ and arrive according to the arrival times of the process $\widehat N(t)$.
	{If the jump intensity is given by $\widehat\lambda = \lambda$, the jump size distribution has cdf $F_{\widehat\eta} = F_{\eta}$, 
	the drift coefficient and volatility coefficient are given, respectively, by
	\begin{align*}\label{eq:musolutionstoch}
	\widehat\mu_M(t,x) = 
	\frac{\sumquadL \sumquad \Vquad \Wquad \beta(t; \zquad, \thetaquad)f\left(x; \Xquadstoch(t)\right)}
	{\sumquadL \sumquad \Vquad \Wquad f\left(x; \Xquadstoch(t)\right)},
	\\
	\widehat\sigma^2_M(t, x) = 
	\frac{\sumquadL \sumquad \Vquad \Wquad \gamma^2(t; \zquad, \thetaquad)  f\left(x; \Xquadstoch(t)\right)}
	{\sumquadL\sumquad \Wquad f\left(x; \Xquadstoch(t)\right)},
	\end{align*}
	where $f\left(x; \Xquadstoch(t)\right)$ is the density of the process $\Xquadstoch(t)$, a version of the conditional process $\Xquad(t)$ in \Cref{prop:XrealSDE} with deterministic switching times determined by $\zquad = (\zeta_{\ell_1}, \dots, \zeta_{\ell_M})$, and corresponding coefficients 
	\begin{align}
	\beta(t; \zquad, \thetaquad) &= \sum_{j=0}^M b_j(s^{\zquad}_j(t), \theta_{n_j}) \1_{t \in [\zeta_{\ell_j}, \zeta_{\ell_{j+1}})}, 
	\\
	\gamma(t; \zquad, \thetaquad) &= \sum_{j=0}^M \sigma_j(s^{\zquad}_j(t), \theta_{n_j}) \1_{t \in [\zeta_{\ell_j}, \zeta_{\ell_{j+1}})}.
	\end{align}
	}
	\par
	Then, the SDE \eqref{eq:XlocalvolSDE} has a strong solution $\Xstoch_M(t)$ with probability density function
	\begin{equation}\label{eq:fXstoch}
	f_{\Xstoch_M(t)}(x) = \sumquadL \sumquad \Vquad \prod_{j=0}^M w_{n_j} f\left(x; \Xquadstoch(t)\right).
	\end{equation}
\end{thm}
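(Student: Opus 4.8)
The plan is to replicate the Fokker--Planck matching argument used in the proof of \Cref{thm:detswitchSDEthm}, now with the single quadrature sum over parameter realisations replaced by the double sum $\sumquadL\sumquad$ over both the sojourn-time and the parameter quadratures. The first observation is that the postulated density \eqref{eq:fXstoch} is a genuine convex combination: the normalisation $\sum_{\ell_j=1}^{L_j} v^{\ell_1,\dots,\ell_{j-1}}_{\ell_j}=1$ for every $j$ yields $\sumquadL \Vquad = 1$, and similarly $\sum_{n_j=1}^{N_j} w_{n_j}=1$ yields $\sumquad \Wquad = 1$, so the product weights $\Vquad\Wquad$ are nonnegative and sum to one. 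Assuming a solution of \eqref{eq:XlocalvolstochSDE} exists, I would write down its Fokker--Planck equation in exactly the form \eqref{eq:XlvFPEformal}, with $\widehat\mu_M$, $\widehat\sigma_M$, $\widehat\lambda$, $f_{\widehat\eta}$ in place of the barred quantities.

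The key structural fact, which makes the whole argument go through verbatim, is that each density $f(x;\Xquadstoch(t))$ in \eqref{eq:fXstoch} is the density of a \emph{conditional} composite process in the sense of \Cref{prop:XrealSDE}: fixing the outer index $\smallvert\bm\ell\smallvert$ prescribes deterministic switching times $\bm\tau_{\smallvert\bm\ell\smallvert}$ via the successive-conditioning scheme, while fixing the inner index $\smallvert\bm n\smallvert$ prescribes deterministic parameters $\thetaquad$. Consequently its drift and volatility coefficients $\beta(t;\zquad,\thetaquad)$ and $\gamma(t;\zquad,\thetaquad)$ depend only on time and not on the state, so each such process admits the explicit Fokker--Planck equation of the form \eqref{eq:XquadFPE}. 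Differentiating \eqref{eq:fXstoch} in time and exchanging the finite sums with $\partial_t$, as in \eqref{eq:linearitytrickRepeat}, I would substitute these explicit equations and equate the result with the Fokker--Planck equation of $\Xstoch_M(t)$.

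Matching the four differential operators term by term then reproduces the system \eqref{eq:volcom}--\eqref{eq:jumpsizecom} with the single sum replaced by the double sum. The jump terms are solved immediately by $\widehat\lambda=\lambda$ and $\widehat\eta=\eta$ (equivalently $F_{\widehat\eta}=F_\eta$), since every component jump diffusion shares these parameters. Comparing the drift and diffusion operators, substituting the shape \eqref{eq:fXstoch} of $f_{\Xstoch_M(t)}$, and arguing as before that the arising integration functions $C_i(t)$ must vanish by the decay of the densities as $x\to\pm\infty$, yields precisely the quotient (local-volatility) expressions for $\widehat\mu_M(t,x)$ and $\widehat\sigma^2_M(t,x)$ stated in the theorem. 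Finally, existence of a strong solution would follow exactly as in \Cref{thm:detswitchSDEthm}: since $\widehat\mu_M$ and $\widehat\sigma^2_M$ are weighted averages of the bounded coefficients $\beta$ and $\gamma$, they inherit a uniform-in-time bound $\max(\sup\widehat\mu_M,\sup\widehat\sigma^2_M)\leq K$, so the Lipschitz and linear-growth conditions of Theorem~1.19 in \cite{Oksendal2007} hold, the jump part being unchanged. I expect the only real difficulty to be organisational rather than conceptual: keeping the two-level index bookkeeping consistent and verifying that the successive-conditioning construction indeed renders each $\Xquadstoch(t)$ a fixed-switching-time conditional process with state-independent coefficients, so that its Fokker--Planck equation is the explicit \eqref{eq:XquadFPE} that drives the matching.
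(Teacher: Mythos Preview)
Your proposal is correct and follows essentially the same approach as the paper, which states only that the proof is analogous to that of \Cref{thm:detswitchSDEthm}. Your outline in fact spells out the analogy in more detail than the paper does, correctly identifying that the double sum $\sumquadL\sumquad$ is again a convex combination and that each $\Xquadstoch(t)$ is a fixed-switching-time conditional process with state-independent coefficients, so the Fokker--Planck matching and the existence argument carry over verbatim.
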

The proof of this theorem is analogous to the proof of \Cref{thm:detswitchSDEthm}. 
Furthermore, we identify the characteristic function for $\Xstoch_M(t)$.
\begin{coroll}\label{cor:Xchfstochswitch}
The characteristic function of $\Xstoch_M(t)$ is given for every $u \in \R$, $t \geq 0$ and $M\in\N$ by
\begin{equation}
\varphi(u; \Xstoch_M(t)) = \e^{iux_0} \sumquadL \Vquad \prod_{j=0}^M \sum_{n_j=1}^{N_j} w_{n_j} \varphi\left(u; Y^{\theta_{n_j}}(s^{\bm z_{|\ell|}}_j(t))\right).
\end{equation}
\end{coroll}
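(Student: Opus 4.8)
The plan is to mirror the argument used for the deterministic switching characteristic function in \Cref{cor:Xlvchf}, now applied to the richer density \eqref{eq:fXstoch}. Since the characteristic function is the Fourier transform of the density and \eqref{eq:fXstoch} is a \emph{finite} linear combination of the densities $f(x; \Xquadstoch(t))$, linearity of the transform lets me pull it inside the two nested finite sums. This reduces the whole computation to
\[
\varphi(u; \Xstoch_M(t)) = \sumquadL \sumquad \Vquad \Wquad \, \varphi(u; \Xquadstoch(t)),
\]
so that the only substantive task is to compute the characteristic function of a single conditional composite process $\Xquadstoch(t)$.

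First I would observe that, for a fixed outer index $\smallvert\bm\ell\smallvert$, the sojourn realisation $\zquad$ prescribes a set of deterministic switching times, so $\Xquadstoch(t)$ is precisely a conditional composite process of the form $x_0 + \sum_{j=0}^M Y^{\theta_{n_j}}_j(s^{\zquad}_j(t))$ treated in \Cref{def:Xprocess}. Because the component processes are built from mutually independent Brownian motions, Poisson processes and jump magnitudes (independence noted after \Cref{dfn:componentprocess}), the density of their sum is the convolution of the component densities together with a translation by $x_0$, exactly as in \eqref{eq:Xdensi}. Applying the convolution theorem for Fourier transforms — the same step invoked in \Cref{cor:Xlvchf} — gives the factorised form
\[
\varphi(u; \Xquadstoch(t)) = \e^{iux_0} \prod_{j=0}^M \varphi\bigl(u; Y^{\theta_{n_j}}_j(s^{\zquad}_j(t))\bigr).
\]
Substituting this back, pulling the factor $\e^{iux_0}$ outside, and then performing the one genuinely combinatorial step: the inner sum $\sumquad$ ranges independently over $n_0, \dots, n_M$, and its summand factorises into a product whose $j$th factor depends only on $n_j$, so the generalised distributive law converts a sum of products into a product of sums,
\[
\sumquad \Wquad \prod_{j=0}^M \varphi\bigl(u; Y^{\theta_{n_j}}_j(s^{\zquad}_j(t))\bigr) = \prod_{j=0}^M \sum_{n_j=1}^{N_j} w_{n_j}\, \varphi\bigl(u; Y^{\theta_{n_j}}_j(s^{\zquad}_j(t))\bigr),
\]
which yields precisely the claimed expression once the outer sum over $\smallvert\bm\ell\smallvert$ with weights $\Vquad$ is reinstated.

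The main obstacle is essentially bookkeeping rather than analysis: I must keep track of the fact that the time shifts $s^{\zquad}_j(t)$ depend on the outer index $\smallvert\bm\ell\smallvert$, so the distributive step must be carried out over the inner indices $n_j$ \emph{at fixed} $\zquad$ before the outer sum is taken — otherwise the factorisation is not legitimate. All interchanges of summation and integration are trivially justified because every sum is finite, and the only analytic input, namely the product form of the characteristic function of $\Xquadstoch(t)$, rests solely on the independence of the component processes already established in the text.
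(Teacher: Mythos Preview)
Your proposal is correct and follows essentially the same approach as the paper: the paper's proof is the single sentence that the result ``can be obtained analogously to \Cref{cor:Xlvchf} with the pdf in \eqref{eq:fXstoch}'', and you have spelled out precisely that analogy --- Fourier-transform the finite mixture density, use linearity, factorise each $\varphi(u;\Xquadstoch(t))$ via the convolution theorem and independence of the components, then apply the distributive law at fixed $\zquad$. Your explicit remark that the factorisation over the $n_j$ must be done inside the outer $\smallvert\bm\ell\smallvert$-sum is the only subtlety, and you handle it correctly.
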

Proof of this result can be obtained analogously to \Cref{cor:Xlvchf} with the pdf in \eqref{eq:fXstoch}.

\par
In summary, we have expanded the theory to accommodate a predetermined number of $M$ stochastic switching events. This extension suits models where the occurrence of switches is understood, but their specific timing is uncertain.
Building on this framework, the model lends itself to extensions that add further randomness to the switching behaviour. Foremost, the introduction of a random number of switches at time $t$, as this variable cannot be predicted without uncertainty in many cases.
%
\subsection{Fully stochastic switching model}
%
Let the number of regimes at time $t$ be denoted by $\mathcal{M}(t)$. The distribution of this quantity is closely linked to that of the sum of sojourn times. Observe that the event $\{\mathcal{M}(t) > m\}$ is equivalent to the event $\{\sum_{j=1}^{m+1} \zeta_j \leq t\}$, and thus we find the distribution of the number of switches from the respective complements to be
\begin{equation}
\Q_2^*\left[\mathcal{M}(t) \leq m\right] = 1 - \Q_2^*\left[ \sum_{\ell=1}^{m+1} \zeta_\ell \leq t \right].
\end{equation}
In the following, let $p_{\mathcal{M}}(t)$ denote the probability mass function of $\mathcal{M}(t)$ with support $\{0, 1, \dots, M\}$, for $M$ specified component processes. 
Depending on the choice of sojourn time distributions, the sum distribution may be analytically known%
\footnote{Notably, a finite number of exponentially distributed sojourn times with the same intensity results in an Erlang distribution. 
} 
or is computed numerically.
\par
Define $\Xfull(t)$ as the randomised composite process in which not only the switching times but also the total number of switches are stochastic. Let $\Xzeta(t; m)$ denote the randomised composite process described previously in this section, with precisely $m\in\N$ stochastic switches. 
The characteristic function of $\Xfull(t)$ can be determined with the tower property as
\begin{align}\label{eq:cffullstoch}
\varphi\left(u; \Xfull(t)\right) 
= \E\left[\exp\left(iu\Xfull(t) \right)\right] 
&= \E\left[\E\left[\left.\exp\left(iu\Xzeta(t ; m) \right)\right|\mathcal{M}(t) = m\right]\right] \nonumber \\
&= \sum_{m=0}^M \varphi\left(u; \Xzeta(t ; m)\right) p_{\mathcal{M}(t)}(m)
\end{align}
In the sum on the right-hand side, we encounter the characteristic functions of the randomised composite process with a fixed number $m$ of stochastic switches, which has been studied previously in this section. 
\begin{rem}
In this framework of fully stochastic switches, it is permissible to consider an infinite sequence of component processes, hence taking $M=\infty$. However, in practical terms, we truncate the sum in \eqref{eq:cffullstoch} at a certain $M_{max}\in\N$, chosen such that the probability $q := Q^*_2[\mathcal{M}(t) > M_{max}] < \delta$ for a specified threshold $\delta > 0$. This threshold is selected such that the resulting approximation error falls below an acceptable tolerance. 
The truncated probability mass function $p_{\mathcal{M}(t)|\mathcal{M}(t)<M_{max}}(m)$ has weights $\hat p_{\mathcal{M}(t)}(m)/(1-q)$ which ensures that 
$\varphi(0; \Xfull(t)) = 1$.
\end{rem}
\par
The local volatility model $\Xstoch_m(t)$, outlined in \Cref{thm:stochswitchSDEthm}, can be used in place of the randomised process $\Xzeta(t; m)$ which it approximates to arbitrary precision. Then, the characteristic function for the corresponding local-volatility type model with fully stochastic switching emerges.
\begin{prop}\label{prop:XfullstochChF}
Let $\Xfullstoch(t)$ denote the stochastic process corresponding to fully stochastic switching with up to $M$ switches.
Let $\Xstoch_m(t)$ be the local volatility model encoding stochastic switches and parameter randomisation as given in  \Cref{thm:stochswitchSDEthm} for a fixed number of $m\in\N$ jumps. The characteristic function of $\Xfullstoch(t)$ is given by
\begin{equation}\label{eq:chffullstochlv}
\varphi\left(u; \Xfullstoch(t)\right) = \sum_{m=0}^M \varphi\left(u; \Xstoch_m(t)\right) p_{\mathcal{M}(t)}(m).
\end{equation}
\end{prop}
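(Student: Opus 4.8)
The plan is to combine the tower property decomposition of the fully stochastic characteristic function, already established in \eqref{eq:cffullstoch}, with the local-volatility approximation result of \Cref{thm:stochswitchSDEthm}. The key observation is that \Cref{prop:XfullstochChF} is essentially the statement of \eqref{eq:cffullstoch} with each exact randomised component $\Xzeta(t;m)$ replaced by its local-volatility proxy $\Xstoch_m(t)$. Since $\Xfullstoch(t)$ is \emph{defined} as the stochastic process corresponding to fully stochastic switching built from the local-volatility components $\Xstoch_m(t)$, the proof amounts to verifying that the same conditioning argument goes through verbatim when the component processes carry deterministic coefficients rather than randomisers.

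First I would condition on the random number of switches $\mathcal{M}(t)$, exactly as in \eqref{eq:cffullstoch}, writing
\begin{equation}
\varphi\left(u; \Xfullstoch(t)\right) = \E\left[\E\left[\left.\exp\left(iu\Xfullstoch(t)\right)\right|\mathcal{M}(t) = m\right]\right].
\end{equation}
On the event $\{\mathcal{M}(t)=m\}$, the process $\Xfullstoch(t)$ coincides in distribution with the fixed-switch local-volatility process $\Xstoch_m(t)$ of \Cref{thm:stochswitchSDEthm}, so the inner conditional expectation equals $\varphi(u; \Xstoch_m(t))$. Summing against the probability mass function $p_{\mathcal{M}(t)}(m)$ over the support $\{0, \dots, M\}$ then yields \eqref{eq:chffullstochlv}. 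The independence of the switching mechanism (the sojourn-time randomisers on $(\Omega_2^*, \mathcal{A}_2, \Q_2^*)$) from the dynamics within each regime is what legitimises treating $\mathcal{M}(t)$ as the conditioning variable and factoring the expectation this way.

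The main obstacle, and the step deserving the most care, is justifying that conditioning on $\{\mathcal{M}(t)=m\}$ genuinely reduces $\Xfullstoch(t)$ to $\Xstoch_m(t)$ rather than to a process whose sojourn-time distributions have been altered by the conditioning. The number of switches and the individual sojourn times are not independent: knowing that exactly $m$ switches occurred by time $t$ is precisely the truncation $\sum_{\ell=1}^{m}\zeta_\ell < t \leq \sum_{\ell=1}^{m+1}\zeta_\ell$ that shapes the conditional sojourn-time law $\bm\zeta^m$ used in \Cref{thm:stochswitchSDEthm}. I would therefore emphasise that the process $\Xstoch_m(t)$ is constructed precisely from this $m$-switch conditional distribution, so that the conditioning event is consistent by design and no spurious reweighting is introduced. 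The remaining steps -- exchanging the finite sum with the expectation and invoking the truncation remark to control the tail beyond $M_{\max}$ -- are routine. Because both the decomposition \eqref{eq:cffullstoch} and the fixed-switch characteristic function of \Cref{thm:stochswitchSDEthm} are already in hand, the result follows immediately upon substituting the latter into the former.
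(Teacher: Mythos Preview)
Your proposal is correct and follows essentially the same approach as the paper, which simply states that the result follows from the tower property as demonstrated in \eqref{eq:cffullstoch}. Your additional care in noting that $\Xstoch_m(t)$ is built from the \emph{conditional} sojourn-time law $\bm\zeta^m$, so that conditioning on $\{\mathcal{M}(t)=m\}$ introduces no spurious reweighting, is a welcome clarification that the paper leaves implicit.
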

The proof follows from the tower property as demonstrated in \eqref{eq:cffullstoch}.
\par
{In this section, we have extended the randomised composite process setting to account for stochastic switches, both with a deterministic and a stochastic model for the number of such switches and obtained the characteristic functions of these models.}
However, it can be observed that additional randomisation features increase the numerical complexity proportional to the number of added features and their quadrature points. Comparing the densities in \eqref{eq:quadratureinterpretationSum} and in \eqref{eq:fXstoch}, we observe that the number of summands increases from $\prod_{j=0}^M N_j$ to $\prod_{j=0}^M N_j L_j$ when the switching times are randomised. In the fully stochastic setting of \eqref{eq:chffullstochlv}, complexity is further increased to $\sum_{m=0}^{M} (\prod_{j=0}^m N_j L_j)$ terms.
\par
In the following section, we offer an alternative approach that circumvents this {curse of dimensionality} with a different construction. Instead of adding stochasticity to the switching times by randomisation, we define a Markov chain that drives the switches between the randomised component processes.

%
\section{Markov-modulation of randomised processes}\label{sec:MMRP}
%
{We propose another switching mechanism,} based on a Markov-modulated framework. Therefore, we model a continuous-time Markov process on a finite state space and consider switching times according to the state changes of the Markov process.
{The outcome is a regime-switching process reminiscent of the model proposed by \cite{Buffington2002}, enriched with randomisation features. The primary focus of this section is on deriving the characteristic function of our model, which constitutes the main result.}
In the following, we model the randomised component processes without time-dependence in the coefficients and such that their conditional versions are L\'evy processes, which preserves stationarity and greatly simplifies the proof given. 
A framework with which the approach can be extended to a broader class of processes is provided by the methodology proposed in \cite{Benth2021}.
%
%

%
%
\par
Let $R(t)$ be a Markov chain with finite state space $\mathcal{S}=\{0, 1, \dots, M\}$, generator $Q\in\R^{(M+1)\times (M+1)}$ and initial state $\bm p\in\R^{M+1}$.
For $j \in \{0, \dots, M\}$, let the randomised component process $\Y{j}(t)$ be a randomised L\'evy process of the form
\begin{equation}
\d \Y{j}(t) = b_j(\vartheta_j) \d t + \sigma_j(\vartheta_j)\d W_j(t) + \d \sum_{i=1}^{P_j(t)}\eta_i(t), \quad \Y{j}(0)=0,
\end{equation}
with  {drift and volatility coefficients, $b_j(\vartheta_j)$ and $\sigma_j(\vartheta_j)$, respectively,} that depend only on the randomiser random variable $\vartheta_j$ as defined as in \Cref{sec:randomisationSetting}. As before, $W_j(t)$ is a standard Brownian motion, {$\eta_i(t)$ are the jump sizes associated with the $i$th arrival of the Poisson process $P_j(t)$ with intensity $\lambda$.} All sources of randomness, {and the randomised component processes}, are assumed mutually independent.
\par
{The composition of these independent, randomised L\'evy processes, driven by the modulating Markov chain $R(t)$, is defined as
\begin{equation}\label{eq:Xmarkovdynamics}
\d \XMarkov := \sum_{j=0}^M \d \Y{j}(t) \1_{\{R(t) = j\}},
\end{equation}
}
with initial value $X(0; \bm \vartheta, R) = 0$.%
\footnote{Non-zero initial values can be accommodated by considering a process $x_0 + \XMarkov$.}
\par
{As before,} $\Yreal{j}(t)$ is the conditional process in which a realisation $\theta_j = \vartheta_j(\omega^*)$ is fixed. 
This is a standard L\'evy process and thus its characteristic function is explicitly available in terms of its characteristic exponent, denoted $\psireal{j}(u)$,
\begin{equation}
\varphi(u; \Yreal{j}(t)) = \exp\left(-t \psireal{j}(u)\right).
\end{equation}
By \Cref{lem:Ychf}, the characteristic function of the randomised component process is
\begin{equation}\label{eq:explicitrLPchf}
\varphi(u; Y^{\vartheta_j}_j(t))= \int_{D_j} \varphi(u; Y^{\theta_j}_j(t)) \d F_{\vartheta_j}(\theta_j) = \int_{D_j} \exp(-t \psireal{j}(u)) \d F_{\vartheta_j}(\theta_j).
\end{equation}
{The main result of this section is the characteristic function of the composite process $\XMarkov$ in the Markov-modulated L\'evy setting which} can be given in explicit form.
	\begin{thm}\label{thm:rMMLPcharfunc}
	{Let $\XMarkov$ be the Markov-modulated randomised process with dynamics given in \eqref{eq:Xmarkovdynamics}, where the underlying Markov process has initial state $\bm p$ and generator $Q$.}
	Let $A = (A_{ij})_{i,j = 1}^M$ be a diagonal matrix with entries
	{
	\begin{equation}
	A_{ij} =
		\begin{cases}
		\int_{D_j} \psireal{j}(u) \d F_{\vartheta_j}(\theta_j), & i = j, \\
		0, & i \neq j. 
		\end{cases}
	\end{equation}
	}
	Then, the characteristic function of the composite process $\XMarkov$ is given by 
	\begin{equation}
	\varphi(u; \XMarkov) = \E\left[ \e^{iu\XMarkov} \right] = \bm p \e^{(Q-A) t}  I,
	\end{equation}
	{where $I$ is the $(M+1)\times(M+1)$ identity matrix.}
	\end{thm}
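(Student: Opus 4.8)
The plan is to derive a backward (Kolmogorov-type) system for a conditional characteristic function and then integrate out the randomisers. First I would fix a realisation $\bm\theta = \bm\vartheta(\omega^*)$ of the randomisers, so that the conditional composite process $X(t;\bm\theta, R)$ is an ordinary Markov-modulated L\'evy process. For such a process the standard tool is to track the joint object $\varphi_i(u,t) := \E[\e^{iuX(t;\bm\theta,R)}\1_{\{R(t)=i\}} \mid R(0)]$, or equivalently the row vector $\bm\varphi(u,t)$ collecting these over states $i\in\mathcal{S}$. Conditioning on the first state transition of $R$ and using that, between jumps of the chain, the process evolves as the L\'evy process $Y^{\theta_i}_i$ with characteristic exponent $\psi^\theta_i(u)$, yields a linear ODE in $t$. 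Concretely, on the event $\{R(s)=i\}$ the increment over $[t,t+\d t)$ contributes the exponent factor $\e^{-\psi^\theta_i(u)\d t}$, while the generator $Q$ governs the state transitions. This produces the matrix ODE $\partial_t \bm\varphi(u,t) = \bm\varphi(u,t)(Q - A(\bm\theta))$, where $A(\bm\theta)$ is diagonal with entries $\psi^{\theta}_i(u)$, with initial condition $\bm\varphi(u,0)=\bm p$. Solving gives the conditional characteristic function $\bm p\, \e^{(Q-A(\bm\theta))t}\, I$.

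Next I would remove the conditioning. Because the chain $R(t)$ and the randomisers $\bm\vartheta$ are mutually independent, and each $\vartheta_j$ affects only the exponent $\psi^{\theta}_j$ in the corresponding diagonal entry, I would take expectations over $\bm\vartheta$. The subtle point is that the expectation must be taken of the matrix exponential $\e^{(Q-A(\bm\theta))t}$, not merely of $A(\bm\theta)$ inside it, so exchanging $\E$ and the exponential is not automatic. The cleanest route is to integrate out the randomisers already at the level of the ODE rather than the solution: since between any two chain transitions only a single state $i$ is active, the sojourn in state $i$ contributes $\E[\e^{-\psi^{\theta_i}_i(u)\cdot(\text{duration})}]$ through the term $\int_{D_i}\psi^{\theta}_i(u)\,\d F_{\vartheta_i}(\theta_i)$ only to first order; a rigorous justification requires that the independence of distinct $\vartheta_j$ decouples the sojourns in distinct states, so that the averaged dynamics again solve a linear ODE with the averaged diagonal matrix $A$ from the statement. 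This replaces $A(\bm\theta)$ by $A$ and yields $\varphi(u; X(t;\bm\vartheta,R)) = \bm p\, \e^{(Q-A)t}\, I$.

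The step I expect to be the main obstacle is precisely this interchange: showing that $\E_{\bm\vartheta}[\bm p\,\e^{(Q-A(\bm\vartheta))t}I] = \bm p\,\e^{(Q-A)t}I$ with $A=\E_{\bm\vartheta}[A(\bm\vartheta)]$. This identity is \emph{not} a consequence of naive linearity of the exponential, and it is false for a single L\'evy process without the chain structure. What makes it hold here is that the randomiser $\vartheta_j$ enters the dynamics only while $R(t)=j$, and the independence between the $\vartheta_j$ and between $\vartheta_j$ and $R$ lets one condition on the full trajectory of $R$, compute the characteristic function as a product of independent single-regime contributions over successive sojourns, and then recognise that averaging each factor reproduces exactly the averaged generator in the matrix exponential. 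I would therefore argue by conditioning on the entire path of $R(t)$ on $[0,t]$, writing the conditional characteristic function as an ordered product of factors $\e^{-\psi^{\theta}_{i}(u)\cdot\text{(sojourn)}}$ over the visited states, using independence to factor the expectation over $\bm\vartheta$ regime-by-regime, and finally re-summing over all chain paths via the Feynman--Kac/ODE representation to recover the matrix exponential with the averaged exponents. The remaining verifications (well-definedness of the integrals $\int_{D_j}\psi^{\theta}_j(u)\,\d F_{\vartheta_j}$, and that the $Y^{\theta_j}_j$ are genuine L\'evy processes so that $\varphi(u;Y^{\theta_j}_j(t))=\e^{-t\psi^{\theta}_j(u)}$) follow directly from the stated assumptions and \Cref{lem:Ychf}.
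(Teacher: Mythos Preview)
Your route differs from the paper's in a material way. The paper never conditions on a realisation $\bm\theta$ and then averages the solved exponential; instead it works directly with the unconditional quantities $F_{\ell j}(t):=\E\bigl[\e^{iu\XMarkov}\1_{\{R(t)=j\}}\mid R(0)=\ell\bigr]$ and derives a forward equation by a small-$h$ expansion. The randomiser is integrated out at the infinitesimal level: the increment over $[t,t+h)$ contributes $\E\bigl[\e^{iu(Y^{\vartheta_k}_k(t+h)-Y^{\vartheta_k}_k(t))}\bigr]=\int_{D_k}\e^{-h\psi^{\theta}_k(u)}\,\d F_{\vartheta_k}(\theta_k)=1-h\Psi_k(u)+o(h)$, which is \emph{linear} in $\psi^{\theta}_k$ to leading order, so the averaged exponent $\Psi_k(u)=\int_{D_k}\psi^{\theta}_k(u)\,\d F_{\vartheta_k}(\theta_k)$ appears directly as a coefficient of the ODE $\frac{\d}{\d t}F(t)=F(t)(Q-A)$. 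The object $\E_{\bm\vartheta}\bigl[\e^{(Q-A(\bm\vartheta))t}\bigr]$ is never formed.

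Your proposed resolution of that interchange has a genuine gap. Conditioning on the full trajectory of $R$ and grouping the contributions ``regime-by-regime'' gives, after integrating out $\bm\vartheta$, the product $\prod_j \E_{\vartheta_j}\bigl[\e^{-\psi^{\theta_j}_j(u)\,D_j}\bigr]$, where $D_j$ is the total occupation time of state $j$ along that trajectory. To then ``re-sum over all chain paths \dots\ to recover the matrix exponential with the averaged exponents'' each factor would have to equal $\e^{-\Psi_j(u)\,D_j}$, and that identity is false in general---it is exactly the single-state case you yourself flag (take a one-state chain, so $D_0=t$, and compare with \eqref{eq:explicitrLPchf}). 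The chain structure does not rescue the step: the \emph{same} draw of $\vartheta_j$ governs every visit to state $j$, so the dependence on $\psi^{\theta_j}_j$ sits inside a single exponential and cannot be decoupled sojourn-by-sojourn. If you want to land on the stated formula by the paper's method, the averaging has to happen at first order in $h$ \emph{before} the exponential is assembled, not after the matrix exponential has been solved.
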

The proof of this result is obtained analogous to \cite{Deelstra2017}, Lemma~A.1 with some modifications to account for the randomisation setting.
\begin{proof}[Proof of \Cref{thm:rMMLPcharfunc}]
For all times $t>0$ and states $\ell, j \in \mathcal{S}$ of the Markov chain $R(t)$, consider the case where the Markov chain originates in state $\ell$ {at $t=0$} and has arrived in state $j$ at time $t$. We denote this by
\begin{equation}
F_{\ell j}(t) := \E\left[\left. \exp(iu \XMarkov) \1_{R(t) = j} \right| R(0) = \ell \right].
\end{equation}
For small time-steps $h>0$, {we only need to consider two cases.} Either, the chain $R(t)$ is already in state $j$ where it remains, or $R(t)$ is in some state $k\neq j$ and arrives in state $j$ by time $t+h$. The possibilities of more than one jump occurring on $[t, t+h]$ can be subsumed in a term of order $o(h)$.
It thus holds that
\begin{align}
F_{\ell j}(t + h) 
&=  \E\left[\left. \e^{iu \XMarkov + iu\left(Y^{\vartheta_j}_j(t+h) - Y^{\vartheta_j}_j(t)\right)} \1_{R(t) = j} \right| R(0) = \ell \right] \nonumber 
\P[R(t+h)=j|R(t)=j] \\
&+ \smashoperator{\sum\limits_{k=0, k\neq j}^M} \E\left[\left. \e^{iu \XMarkov + iu\left(Y^{\vartheta_k}_k(t+h) - Y^{\vartheta_k}_k(t)\right)} \1_{R(t) = k} \right| R(0) = \ell \right] \nonumber \\
&\hphantom{{\sum\limits_{k=0, k\neq j}^M}} 
\cdot \P[R(t+h)=j|R(t)=k] + {o}(h). 
\end{align}
{Using the stationarity of the conditional process $Y^{\theta_j}_j$, we find that }
	\begin{multline}
	\E\left[ \e^{iu (Y^{\vartheta_j}_j(t+h) - Y^{\vartheta_j}_j(t))} \right] 
	= \int_{D_j} \E[\e^{iu (Y^{\theta_j}_j(t+h) - Y^{\theta_j}_j(t))}] \d F_{\vartheta_j}(\theta_j)
	\\
	= \int_{D_j} \E[\e^{iu Y^{\theta_j}_j(h)}] \d F_{\vartheta_j}(\theta_j) 
	= \E\left[ \e^{iu Y^{\vartheta_j}_j(h)}\right].
	\end{multline}
{For every state $k \in \mathcal{S}$, we may thus rewrite each of the terms in the sum above as}
\begin{multline}
\E\left[\left. \e^{iu \XMarkov + iu(Y^{\vartheta_k}_k(t+h) - Y^{\vartheta_k}_k(t))}  \1_{R(t) = k}\right| R(0)=\ell\right] 
\\
= \E\left[\left. \e^{iu \XMarkov} \1_{R(t) = k}\right| R(0)=\ell\right] \E\left[\e^{iu(Y^{\vartheta_k}_k(t+h) - Y^{\vartheta_k}_k(t))} \right]  
\\
= F_{\ell k}(t) \E\left[ \e^{iuY^{\vartheta_k}_k(h)} \right] 
= F_{\ell k}(t) \int_{D_\ell} \e^{-h \psireal{k}(u)} \d F_{\vartheta_k}(\theta_k).
\end{multline}
The transition probabilities of the Markov process $R$ are explicitly available in terms of its generator $Q$,
\begin{align}
\P[R(t+h)=j|R(t)=j] &= 1 + Q_{jj}h + o(h), \\
\P[R(t+h)=j|R(t)=k] &= Q_{kj}h + o(h).
\end{align}
Consequently, $F_{\ell j}(t+h)$ is given by
\begin{align}
F_{\ell j}(t + h) 
&= F_{\ell j}(t) \int_{D_j} \e^{- \psireal{j}(u)h} \d F_{\vartheta_j}(\theta_j) (1 + Q_{jj}h + o(h))  \nonumber\\
&\quad + \sum\limits_{k=0, k\neq j}^M F_{\ell k}(t) \int_{D_k} \e^{ - \psireal{k}(u)h} \d F_{\vartheta_k}(\theta_k) (Q_{kj}h + o(h)).
\end{align}
{With the first order expansion of the exponential function} 
$\exp(zh) = 1 + zh + o(h)$, 
we can rewrite the above equation to
\begin{multline}\label{eq:5.14}
F_{\ell j}(t + h) = F_{\ell j}(t) (1 - \Psi_j(u)h + o(h)) (1 + Q_{jj}h + o(h)) \\
+ \sum\limits_{k=0, k\neq j}^M F_{\ell k}(t) (1 - \Psi_k(u)h + o(h)) (Q_{kj}h + o(h)),
\end{multline}
{where we denote the integrated characteristic exponent by
\begin{equation}
\Psi_j(u) := \int_{D_j} \psireal{j}(u) \d F_{\vartheta_j}(\theta_j).
\end{equation}
}
By expanding \eqref{eq:5.14} and collecting all the terms of order $o(h)$, one can show that the expression is equivalent to
\begin{equation}
F_{\ell j}(t+h) = F_{\ell j}(t) - F_{\ell j}(t)\Psi_j(u)h + \sum\limits_{k=0}^M F_{\ell k}(t) Q_{kj} h + o(h).
\end{equation}
Define the {$(M+1)\times (M+1)$ matrix $F(t) := (F_{\ell j}(t))_{\ell, j = 0}^M$}. Then {the above equation may be rearranged to}
\begin{equation}
\frac1h (F(t+h) - F(t) - o(h)) = \left(-F_{\ell j}(t)\Psi_j(u) + \sum\limits_{k=0}^M F_{\ell k}(t) Q_{kj} \right)_{\ell,j=0}^M.\label{eq:5.21}
\end{equation}
It can be verified that the right-hand side of \eqref{eq:5.21} coincides with the entries of the matrix $F(t)Q - F(t) A$, where $Q$ is the generator of the Markov process $R(t)$ and $A$ is given by 
\begin{equation}
A_{jk} :=
\begin{cases}
0,& j \neq k, \\
\Psi_j(u), & j = k.
\end{cases}
\end{equation}
Finally, taking the limit of $h\to0$ we find that
\begin{equation}
\frac{\d}{\d t} F(t) = F(t) Q - F(t) A,
\end{equation}
which has a solution $F(t) = \e^{(Q-A)t}$.
{Gathering all} state combinations $j, \ell \in \mathcal{S}$ for $F_{\ell j}(t)$ yields the result,
\begin{equation}
\E\left[ e^{iu\XMarkov} \right] = \sum\limits_{\ell=1}^M \sum\limits_{j=1}^M F_{\ell j}(t) \P[R(0)=\ell] = \bm p e^{(Q-A)t} I.
\end{equation}
\end{proof}
%
%
%
\section{{Numerical Study with a Financial Application}}\label{sec:numerics}
This section provides a brief overview of the local volatility models $\bar X(t), \widehat X(t), \widetilde X(t)$ associated with composite randomised processes proposed in this article. 
We examine a log-price asset model enriched with randomisation and switching. First, we present path realisations of the deterministic and stochastic switching cases, giving a more detailed view of their behaviours. 
{Then, we consider an application to the financial problem of option pricing, in which} we compare the shapes of the implied volatility curves for a European option under the different switching assumptions and additionally contrast them with a randomised setting without switches between randomiser regimes.
\par
Consider an asset with price given by $S(t) = S(0)\e^{\mathbb{X}(t)}$, where $\mathbb{X}\in\{\bar X, \widehat X, \widetilde X\}$ represents the local volatility models associated with the composite randomised processes with deterministic, stochastic and fully stochastic switching, respectively.
For every $j\in\{0, \dots, M\}$, the component processes of \Cref{dfn:componentprocess} can be understood as randomised versions of the model of \cite{Merton1976} {if we choose the drift functions to be}
\begin{equation}
b_j(\vartheta_j) = r - \frac{\sigma_j^2(\vartheta_j)}{2} - \lambda k_j,
\end{equation}
where $k_j:=\E[\e^{\eta_j}-1]$ and $r\geq0$ represents the risk-free interest rate\footnote{Throughout this section, we select $r=0.05$ and note that this parameter is of no concern for the model features shown.}. 
In the following, we {also eliminate} the jump component $\d  \sum_{i=1}^{P_j(t)} \eta_j(i)$, so that each component process $\Y{j}$ can be connected to a randomised Black--Scholes model. The parameter randomisers $\vartheta_j$ are modelled with a normal distribution, and the volatility functions are given by $\sigma_j(\vartheta_j) = \vartheta_j$.\footnote{The boundedness criterion for $\sigma_j$ is fulfilled in all practical applications by some maximal sample $\vartheta_j(\omega^*)<\infty$.} %
The resulting randomised component processes considered in this section are
\begin{equation}
\d \Y{j}(t) = \left(r - \frac{\vartheta_j^2}{2}\right)\d t + \vartheta_j \d W_j(t).
\end{equation}
\par
{The regimes associated with normally distributed randomisers $\vartheta_j$ are characterised by the two parameters of the distributions, $\vartheta_j \sim \mathcal{N}(\nu_j, \xi_j^2)$. The mean $\nu_j$ prescribes the average level of volatility within the regime, and the standard deviation $\xi_j$ can be interpreted as the certainty about this volatility level, or as a measure of the expected fluctuation around the volatility level.}
\par
We consider a scenario where the seasonality of the asset fluctuates between two regimes representing a `calm' and an `excited' period. The calm regime is represented by randomisers with low mean $\nu_j$ and low variance $\xi_j^2$, whereas both these parameters are increased during the excited period, corresponding to elevated volatility and more fluctuation about this level. We establish a repeating pattern by imbuing all even-numbered randomisers with the `calm' distribution, $\vartheta_0, \vartheta_2, \dots \sim \mathcal{N}(\nu_0, \xi_0^2)$, and all odd-numbered randomisers with the `excited' distribution $\vartheta_1, \vartheta_3, \dots \sim \mathcal{N}(\nu_1, \xi_1^2)$. Whenever quadrature pairs are computed from randomisers or sojourn times, we compute with $N_j = L_j = 7$ quadrature points for all $j\geq 0$.
\begin{figure}[t]
\centering
\includegraphics[width=.5\textwidth]{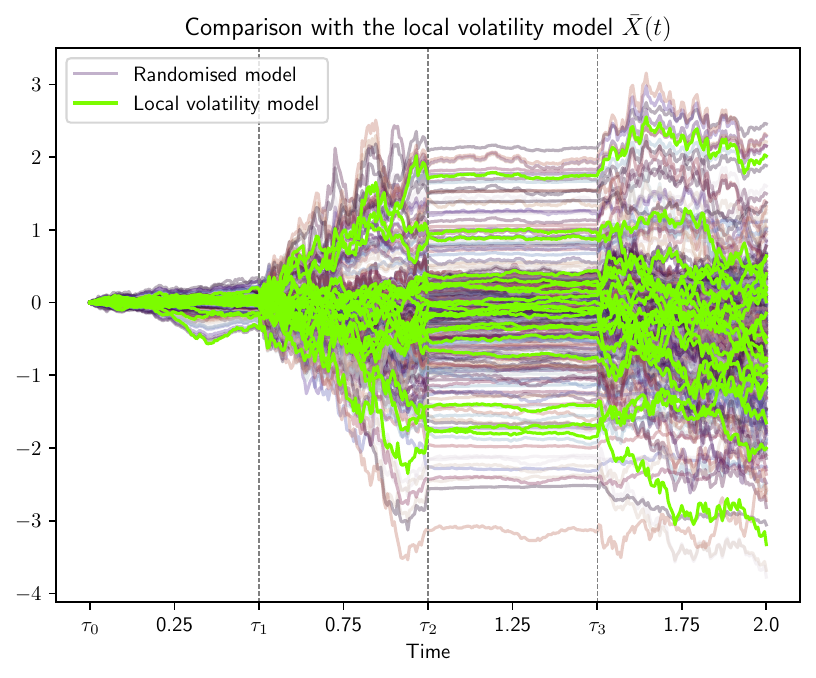}\includegraphics[width=.5\textwidth]{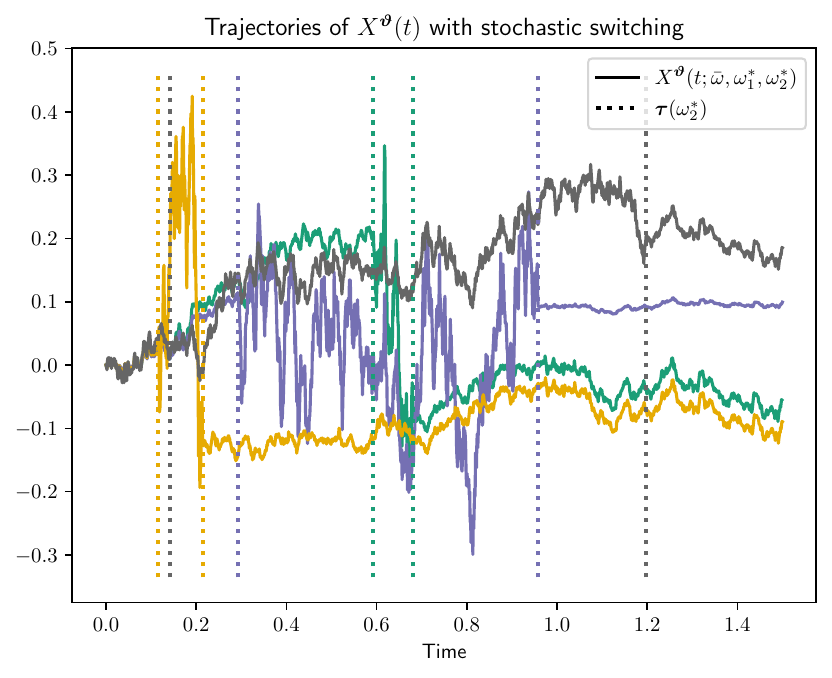}
\caption[Path simulation: Deterministic switching local volatility, stochastic switching randomised]
{Left: Sample paths of the randomised model with deterministic switching are contrasted with its associated local volatility model $\Xlv(t)$. Right: Sample paths of the stochastic switching model with two switches. Every trajectory uses the same underlying Brownian motion, all differences stem from the random samples of parameter randomisers and sojourn times.}
\label{fig2}
\end{figure}
\par
In \Cref{fig1}, realisations of the randomised composite process $\X(t)$ with deterministic switching, as defined in \Cref{def:Xprocess}, are given. 
The realisation of the underlying Brownian motion is always the same, only the realisations of the randomisers $\vartheta_j(\omega^*)$ vary. 
The trajectories exhibit distinct patterns of low and high volatility as they pass through the different regimes.
In \Cref{fig2} on the left side, trajectories of the associated local volatility model $\Xlv(t)$ as defined in \Cref{thm:detswitchSDEthm} are given. They are contrasted with realisations of the randomised model $\X(t)$, this time for both different randomiser and Brownian motion samples.
Notably, within each regime, every trajectory of the local volatility model exhibits the same volatility and drift. We no longer observe the differences between trajectories of the randomised model as in \Cref{fig1}. Otherwise, the process structure is maintained, as the construction ensures {consistency in the marginal distributions.}
\par 
On the right side of \Cref{fig2}, trajectories of $\Xzeta(t)$ with stochastic switching as described in \Cref{def:Xprocessstoch} are given. The sojourn times $\zeta_j \sim \mathcal{E}\mathrm{xp}(2)$, $j=0,1,2$, are exponentially distributed with a mean of $1/2$ and conditioned on $M=2$ switches on the time interval $[0, 1.5]$. As in \Cref{fig1}, the underlying Brownian motion sample is the same for each path, whereas different samples of the parameter randomisers $\vartheta_j$ and the sojourn times $\zeta_j$ are taken. The observed low and high volatility patterns persist in the regimes, with the distinction that switching times are now specific to each sample path. Even though each path in the figure is driven by the same Brownian motion, the pathwise behaviour varies widely. 
\begin{figure}[t]
\centering
\includegraphics[width=\textwidth]{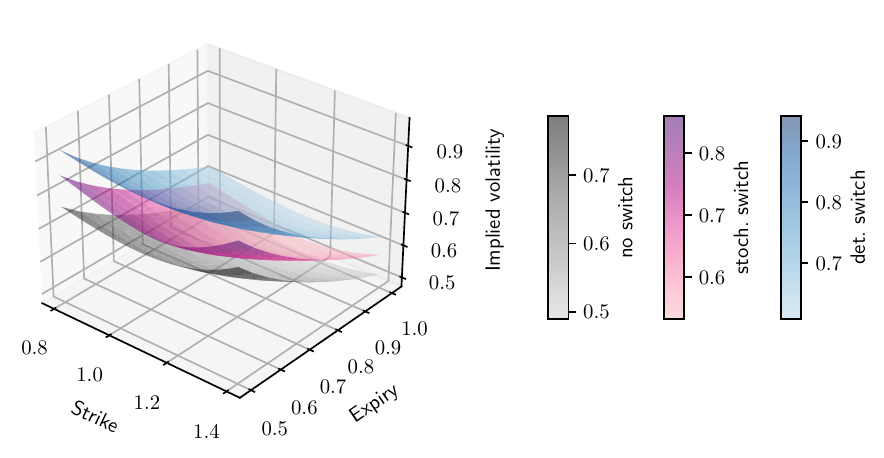}
\caption[IV surface: Single switch against expiry]
{Implied volatility surface as the expiry $T$ of the option decreases. 
The underlying asset is modelled with one switch at time $T/2$ in the deterministic switch model $\Xlv(t)$, and one exponentially distributed switch such that $\E[\zeta_1] = T/2$ in the stochastic switching model $\Xstoch(t)$, both times the composite process switches from a `calm' into an `excited' randomiser regime. We also consider the randomised model with no switch, where the randomiser is that of the excited regime throughout.}
\label{fig3}
\end{figure}
\par
We continue with a study of implied volatility (IV) surfaces, in which we compare the surfaces obtained when the underlying asset is modelled with the composite process under different switching rules. In all cases, we consider a European call option and compute the implied volatility against a normed strike, i.e., a strike of $1$ corresponds to the at-the-money (ATM) option. We model the asset with the local volatility models corresponding to deterministic switching, $\Xlv(t)$, stochastic switching, $\Xstoch(t)$, and fully stochastic switching $\Xfullstoch(t)$. Furthermore, we compare with the randomised model in which no switch occurs. For all these models, the characteristic functions have been obtained in \Cref{cor:Xlvchf}, \Cref{cor:Xchfstochswitch}, \Cref{prop:XfullstochChF}, and \Cref{lem:Ychfquad}, respectively. Based on the characteristic functions, we compute all option prices needed for implied volatility computations with the COS-method of \cite{fang2009cos}.
\par
In \Cref{fig3} we consider the classic volatility surface spanned by a strike range $K\in[0.8, 1.4]$ and an expiry $T\in[.5, 1]$. Three models for the underlying asset's (log)price are being compared. There is the deterministic switching model $\Xlv(t)$ with one switch at time $\tau_1 = T/2$ from a `calm' randomiser $\vartheta_0 \sim \mathcal{N}(0.15, 0.1^2)$ to an excited randomiser $\vartheta_1 \sim\mathcal{N}(0.3, 1)$. The same randomiser sequence is used in the stochastic switching model $\Xstoch(t)$ with an exponentially distributed sojourn time $\zeta(1)\sim\mathcal{E}\mathrm{xp}(2T)$ such that $\E[\zeta_1] = T/2$. Finally, we also consider a randomised model without a switch, in which the randomiser is that of the excited regime, $\vartheta_1$, throughout.
The resulting IV surfaces are ordered, with deterministic switching surpassing stochastic switching, which, in turn, exceeds the case with no switching at every expiry. The sensitivity to expiry, i.e., the slope of the IV surfaces, appears to be consistent across the three models.  We observe a mild amount of smile and skew in the implied volatilities obtained from all three randomised models, which increases as the expiry is reduced.
\begin{figure}[t]
\centering
\includegraphics[width=\textwidth]{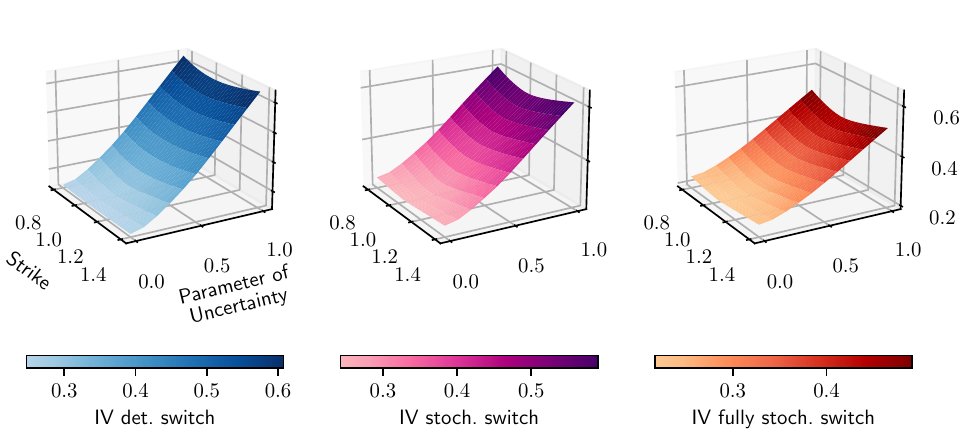}
\caption[IV surfaces in randomised market states]
{We consider the two-regime example of a `calm' and an `excited' market state, with randomiser distribution \protect{$\mathcal{N}(\nu_0, \xi_0^2)$} during the `calm' regime and randomiser distribution \protect{$\mathcal{N}(\nu_1, \xi_1^2)$} during the `excited' regime. The implied volatility surfaces are obtained for a range of values for the `excited' randomiser's standard deviation \protect{$\xi_1$}. Considered are the deterministic and stochastic switching models with one switch each, as well as the fully stochastic switching model with a random number of switches.}
\label{fig:fig4.4}
\end{figure}
Another type of implied volatility surface is studied in \Cref{fig:fig4.4}, where the expiry remains fixed at $T=1$. Instead, the IV surface is spanned between strikes $K\in[0.8, 1.4]$ and a range for the standard deviation $\xi_1\in[0, 1]$ of the randomiser $\vartheta_1$, which is associated with the `excited' regime. With the resulting surfaces, it is possible to observe the sensitivity of model IV to the variance of one of its randomisers. The randomiser $\vartheta_0$ of the calm regime remains unchanged and we again consider the models $\Xlv(t)$ and $\Xstoch(t)$ with one deterministic, respectively stochastic, switch. Additionally, we consider the fully-stochastic switching model $\Xfullstoch(t)$ in which multiple switches can occur, alternately between randomisers associated with the calm and the excited regimes. All sojourn times are i.i.d. $\mathcal{E}\mathrm{xp}(2)$ distributed. In the computations, the $\Xfullstoch(t)$ model is limited to a maximum of $4$ switches, as the probability that $5$ or more switches occur is less than $0.05$ with negligible impact on the resulting implied volatility surfaces. 
Again, all three IV surfaces exhibit a mild `smile' shape that is more pronounced as the randomiser standard deviation $\xi_1$ increases, and again the IV surfaces are ordered. This time, at each fixed value of $\xi_1$, the implied volatility of the deterministic switching model is slightly larger than that of the stochastic switching model, which is larger than the implied volatility of the fully stochastic switching model. The sensitivities to the studied parameter, expressed through the IV surface slopes, appear to also be ranked in the same order. An explanation for the reduced IV in the fully stochastic model is found in the influence of the no-switch case arising when the stochastic number of switches at expiry $\mathcal{M}(1)$ is zero.
Overall, both experiments show that randomisation and switching have a significant impact on the model's implied volatility. 
%
\section{Conclusion}\label{sec:conclusio}
{We construct local volatility models that emulate stochastic processes wherein drift and diffusion coefficients are contingent on random variables, referred to as randomisers. Additionally, the dynamics and governing random variables change between time regimes determined either deterministically or stochastically.}
The resulting local volatility models do not require any non-standard definitions of a stochastic process which allows for a classical treatment. Their form relates to specific weighted sums of the mimicked processes, where the randomisers are replaced with particular realisations, i.e., deterministic values.
These values and weight pairs are obtained by applying the Gauss quadrature technique to integrals against the density functions of the randomisers. Their computation with the Golub-Welsch algorithm only requires knowledge about the moments of the randomisers.
We compute characteristic functions for the local volatility models corresponding to deterministic switching, stochastic switching with a known number of switches, and fully stochastic switching where additionally the number of switches is modelled stochastically.
We also formulate a Markov-modulated model in which a Markov process governs the switching between randomised processes.
Numerical experiments show that randomisation and switching have a sizeable impact on implied volatilities, indicating a large impact of modelling externalities that would cause regimes or randomisation. 
\section*{Acknowledgments}
This research is part of the ABC--EU--XVA project and has received funding from the European Union's Horizon 2020 research and innovation programme under the Marie Sk\l{}odowska--Curie grant agreement No.\ 813261.
\bibliography{Lit}
%
\appendix
\section{Proof of \Cref{prop:XrealSDE}}\label{appx:XrealSDEproof}
\begin{proof}
By \Cref{dfn:componentprocess} and \Cref{def:Xprocess}, it holds that
	\begin{equation}
	\Xreal{}(t) = x_0 + \sum_{j=0}^M \left( \int_0^{s_j(t)} b_j(u, \theta_j)\d u +  \int_0^{s_j(t)} \sigma_j(u, \theta_j) \d W_j(u) +\sum_{i_j=1}^{P_j(s_j(t))} \eta_j(i_j)\right).
	\end{equation}
{With a change of variables according to the definition of the time shift $s_j(t)$ in \eqref{def:sj} shows that the drift part equals}
	\begin{multline}
	 \sum_{j=0}^M \int_0^{s_j(t)} b_j(u, \theta_j)\d u 
	 = \sum_{j=0}^M \int_{\tau_j \wedge t}^{\tau_{j+1}\wedge t} b_j(s_j(u), \theta_j)\d u
	 \\
	 = \int_0^t \sum_{j=0}^M b_j(s_j(u), \theta_j)  \1_{u \in [\tau_j, \tau_{j+1})} \d u 
	 = \int_0^t \beta(u; \bm\theta) \d u.
	\end{multline}
Similarly, it holds for the diffusion part that
	\begin{align}
	\sum_{j=0}^M \int_0^{s_j(t)} \sigma_j(u, \theta_j) \d W_j(u) 
	&= \sum_{j=0}^M \int_{\tau_j \wedge t}^{\tau_{j+1}\wedge t} \sigma_j(s_j(u), \theta_j) \d W_j(s_j(u)) 
	\nonumber
	\\
	&= \sum_{j=0}^M \int_{\tau_j \wedge t}^{\tau_{j+1}\wedge t} \sigma_j(s_j(u), \theta_j) \d \left[\sum_{k=0}^M W_k(s_k(u)) \right],
	\end{align}
where we used that $\d \sum_{k=0}^M W_k(s_k(u)) = \d W_j(s_j(u))$ for $u\in(\tau_j, \tau_{j+1})$.
By setting 
	\begin{equation}\label{eq:Bdefproof}
	\widetilde W(t) := \sum_{k=0}^M W_k(s_k(t)),
	\end{equation}
the diffusion component becomes
	\begin{equation}
	\sum_{j=0}^M \int_0^{s_j(t)} \sigma_j(u, \theta_j) \d W_j(u)
	= \int_0^t \sum_{j=0}^M \sigma_j(s_j(u),  \theta_j) \1_{u \in [\tau_j, \tau_{j+1})} \d \widetilde W(u) = \int_0^t \gamma(u; \bm \theta)\d \widetilde W(u).
	\end{equation}
\par
Next, we show that $\widetilde W(t)$ is a standard Brownian motion. It is immediate from \eqref{eq:Bdefproof} that
{$\widetilde W(t)$ is a Gaussian process with almost surely $\widetilde W(0) = 0$ and continuous paths.}
Let $0 \leq u < t$ and denote by $L := \sup\{j\colon \tau_j \leq u\}$ and $K := \sup\{j\colon \tau_j \leq t\}$ the indices of the final switching times before $u$ and $t$, respectively. Then, by the independence of $W_j$ and $W_\ell$ for $j\neq \ell$, it holds that
	\begin{multline}
	\Cov[\widetilde W(t), \widetilde W(u)] 
	= \Cov\left[ \sum_{j=0}^{K} W_j(s_j(t)), \sum_{\ell=0}^{L} W_\ell(s_\ell(u)) \right] 
	\\
	= \sum_{j=0}^{L-1} \Var\left[ W_j(\tau_{j+1}-\tau_j)\right] + \Cov[W_{L}(s_{L}(t)), W_{L}(s_{L}(u))]
	\\
	= \sum_{j=0}^{L-1} (\tau_{j+1} - \tau_j) + (u - \tau_{L}) = u,
	\end{multline}
therefore, $\widetilde W(t)$ is a standard Brownian motion. 
\par
It remains to show that the jump component has the desired shape,
	\begin{equation}
	\sum_{j=0}^M \sum_{i_j=1}^{P_j(s_j(t))} \eta_j(i_j) = \sum_{k=1}^{\widetilde P(t)} \pi(k).
	\end{equation}
As all the jump sizes $\eta_j$ have the same cdf $F_\eta$, which is also the distribution of $\pi$, the result follows by showing that the concatenation of Poisson processes $P_j(s_j(t))$ is again a Poisson process,
	\begin{equation}
	\widetilde P(t) := \sum_{j=0}^M P_j(s_j(t)).
	\end{equation}
We note that $\widetilde P(0) = \sum_{j=0}^M P_j(0) = 0$ almost surely, and let $L$ and $K$ be the last switching time index before times $u$ and $t$, respectively, with $u<t$, as before.
Observe that it holds
	\begin{equation}
	\widetilde P(t) - \widetilde P(u) 
	= \sum_{j=L+1}^M P_j(s_j(t)) + (P_L(s_L(t)) - P_L(s_L(u))),
	\end{equation}
which, by the independence of the processes $P_j$, is a Poisson distributed random variable with parameter $\lambda \left(\sum_{j=L+1}^M s_j(t) + s_L(t) - s_L(u)\right)$.
Identifying the telescopic sum, we find this parameter to be
	\begin{equation}
	\lambda \left(\sum_{j=L+1}^M s_j(t) + s_L(t) - s_L(u) \right) = \lambda (t - u).
	\end{equation}
It remains to show independence between increments. Let $v<u<t$ be an arbitrary partition and let $J := \sup\{j\colon \tau_j \leq v\}$. Then, by the previous argument, it holds that
	\begin{equation}
	\widetilde P(t) - \widetilde P(u) = \sum_{j=L+1}^M P_j(s_j(t)) + (P_L(s_L(t)) - P_L(s_L(u))),
	\end{equation}
and
	\begin{equation}
	\widetilde P(u) - \widetilde P(v) = \sum_{j=J+1}^L P_j(s_j(u)) + (P_J(s_J(u)) - P_J(s_J(v))).
	\end{equation}
Independence between $\widetilde P(t)-\widetilde P(u)$ and $\widetilde P(u)-\widetilde P(v)$ follows immediately by the independence between the Poisson processes $P_j$ and $P_k$ for $j\neq k$, and by the independence of the increment $P_L(s_L(t)) - P_L(s_L(u))$ from $P_L(s_L(u))$. 
\end{proof}
\section{Probability density functions of the randomised component and composite process}\label{appx:RandomisedPDF}
{Analogous to \Cref{lem:Ychf} and \Cref{lem:Ychfquad}, we find the pdf of the randomised component processes $\Y{j}(t)$ and its quadrature approximation.}
	\begin{coroll}\label{cor:Ydensity}
	{Let the conditions of \Cref{lem:Ychf} and \Cref{lem:Ychfquad} hold and let the pdf of $\Yreal{j}(t)$ be denoted by $f(y; \Yreal{j}(t))$.}
	Then, the probability density function of the randomised component process is given by
		\begin{equation}\label{eq:Ydensiquad}
		f(y; \Y{j}(t)) 
		= \int_{D_j} f(y; Y^{\theta_j}_j(t)) \d F_{\vartheta_j}(\theta_j)
		= \sum_{n_j=1}^{N_j} w_{n_j} f(y; \Yquad{n}{j}(t)) + \hat\varepsilon_{N_j}(t, y).
		\end{equation}
	The Gauss-quadrature pairs  $(w_{n_j}, \theta_{n_j})_{n_j=1}^{N_j}$ are the same as in \Cref{lem:Ychf}, and $\hat\varepsilon_{N_j}(t, y)$ is the quadrature approximation error bounded by 
		\begin{equation}\label{eq:Ypdfquaderror}
		\hat\varepsilon_{N_j}(t, y) \leq \sup\limits_{\xi \in D_j} \frac{1}{(2N)!} \left.\frac{\partial^{2N}}{\partial \theta_j^{2N}} f(y; Y^{\theta_j}_j) \right|_{\theta_j = \xi} .
		\end{equation}
	\end{coroll}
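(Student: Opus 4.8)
The plan is to mirror the proofs of \Cref{lem:Ychf} and \Cref{lem:Ychfquad}, with the probability density function playing the role of the characteristic function throughout. The argument separates into two parts: first establishing the mixture (integral) representation of the density by conditioning on the randomiser, and then discretising this integral by Gauss quadrature.

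First I would condition on the randomiser $\vartheta_j$. Once a realisation $\theta_j = \vartheta_j(\omega^*)$ is fixed, the randomised component process $\Y{j}(t)$ reduces to the conditional component process $\Yreal{j}(t)$ of \Cref{dfn:componentprocess}; since the Brownian, Poisson and jump-size sources are independent of $\vartheta_j$, the law of total probability gives, for every Borel set $B$,
\begin{equation}
\Q[\Y{j}(t)\in B] = \int_{D_j} \bar\Q[Y^{\theta_j}_j(t)\in B]\, \d F_{\vartheta_j}(\theta_j) = \int_{D_j}\int_B f(y; Y^{\theta_j}_j(t))\, \d y\, \d F_{\vartheta_j}(\theta_j).
\end{equation}
As the integrand is nonnegative, Tonelli's theorem permits exchanging the order of integration, which identifies the density of $\Y{j}(t)$ as the inner integral $\int_{D_j} f(y; Y^{\theta_j}_j(t))\, \d F_{\vartheta_j}(\theta_j)$ and establishes the first equality. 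Equivalently, this representation follows by taking the inverse Fourier transform of the characteristic function in \Cref{lem:Ychf} and interchanging it with the integral over $D_j$.

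Next I would apply Gauss quadrature to the integral $\int_{D_j} f(y; Y^{\theta_j}_j(t)) f_{\vartheta_j}(\theta_j)\, \d \theta_j$ with weight function $f_{\vartheta_j}$, exactly as in \Cref{lem:Ychfquad}. The assumption of finite moments $\E[\vartheta_j^{n_j}]<\infty$ for $n_j \leq 2N_j$ guarantees the existence of the $N_j$-point Gauss-quadrature pairs $(w_{n_j},\theta_{n_j})_{n_j=1}^{N_j}$; these depend only on the moments of $\vartheta_j$ and not on the integrand, so they coincide with the pairs appearing in \Cref{lem:Ychf}. Invoking Theorem~2.1 in \cite{GrzelakRand1} then yields the claimed discretisation together with the remainder $\hat\varepsilon_{N_j}(t,y)$ and its bound \eqref{eq:Ypdfquaderror}.

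The hard part will be verifying the regularity required for the quadrature remainder: the classical Gauss-quadrature error involves the $2N_j$-th derivative of the integrand with respect to the node variable, so one must ensure that $\theta_j \mapsto f(y; Y^{\theta_j}_j(t))$ is $2N_j$ times continuously differentiable on $D_j$ with finite supremum of this derivative. For the jump-diffusion densities of \Cref{dfn:componentprocess} with bounded coefficient functions $b_j$ and $\sigma_j$ this smoothness holds, so the bound \eqref{eq:Ypdfquaderror} is well defined; this parametric differentiation of the density under the integral sign is the only step requiring genuine care, the remainder being a direct transcription of the arguments for \Cref{lem:Ychf} and \Cref{lem:Ychfquad}.
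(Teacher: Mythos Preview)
Your proposal is correct and covers the paper's argument: the paper establishes the mixture representation precisely via the inverse Fourier transform route you mention as an alternative, writing $f(y;\Y{j}(t))=\mathfrak{F}^{-1}\varphi(u;\Y{j}(t))$ and using linearity to pull $\mathfrak{F}^{-1}$ inside the integral over $D_j$. Your lead argument via the law of total probability and Tonelli is an equally valid and slightly more elementary way to obtain the same integral representation; the quadrature step and error bound then follow exactly as you describe, mirroring \Cref{lem:Ychfquad}.
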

This result can be proved by the same means as \Cref{lem:Ychf}, with the characteristic function replaced by the probability density function. In practice, however, the density of $\Yreal{j}(t)$ is not always explicitly known, e.g., in the presence of jump terms. In this section, we base the proof on the Fourier transformation relationship between pdf and chf, which allows a connection to the approximation bound \eqref{eq:Ychfquaderror} of the characteristic function. This approach is also used in the proof of one of the main results, \Cref{thm:DensityInterpretation}.

	\begin{proof}
	Let $\mathfrak{F}$ denote the (probabilist's) Fourier transformation, and $\mathfrak{F}^{-1}$ its inverse. 
	Utilizing the linearity of the (inverse) Fourier transform, we obtain the density of the randomised component process as
	\begin{align*}
	f(y; Y{j}(t)) 
	= \mathfrak{F}^{-1}\varphi(u; \Y{j}(t)) 
	&= \mathfrak{F}^{-1}\int_{D_j} \varphi(u; Y^{\theta_j}_j(t)) \d F_{\vartheta_j}(\theta_j)
	\\
	&= \int_{D_j} \mathfrak{F}^{-1}\varphi(u; Y^{\theta_j}_j(t)) \d F_{\vartheta_j}(\theta_j).
	\end{align*}
	\end{proof}
%
{We proceed with the proof of \Cref{thm:DensityInterpretation}.}
\begin{proof}[Proof of \Cref{thm:DensityInterpretation}, i]
Let $\X(t)$ be the randomised composite process with conditional version $\Xreal(t)$ given in \Cref{def:Xprocess}.
Analogous to the proof of \Cref{lem:Ychf}, we find the characteristic function $\varphi(u;\X(t))$ to be
	\begin{equation}\label{eq:chfX1}
	\varphi(u;\X(t)) = \E\left[ \e^{iu\X(t)} \right] 
	= \E\left[ \E\left[\left.\e^{iu\Xreal(t)} \right| \theta = \vartheta(\omega^*)\right]\right]
	= \E\left[ \varphi(u;\Xreal(t)) \right].
	\end{equation}
By the independence of the component processes $\Yreal{j}$, $\Yreal{k}$ for $j\neq k$, the characteristic function of $\Xreal(t)$ can be factored into
	\begin{equation}
	\varphi(u;\Xreal(t)) = \varphi\left(u;x_0 + \sum_{j=0}^M \Yreal{j}(s_j(t))\right)
	= \e^{iux_0} \prod_{j=0}^M \varphi\left(u;\Yreal{j}(s_j(t))\right).
	\end{equation}
Inserting this back into \eqref{eq:chfX1} and utilizing the independence of $\vartheta_j$, $\vartheta_k$ for $j\neq k$, we obtain
	\begin{equation}
	\varphi(u;\X(t)) 
	= \e^{iux_0} \prod_{j=0}^M \E\left[ \varphi\left(u; Y^{\theta_j}_j(s_j(t))\right) \right]
	= \e^{iux_0}\prod_{j=0}^M \varphi(u; \Y{j}(s_j(t))). \label{eq:chfXproduct}
	\end{equation}
The pdf of $\X(t)$ in \eqref{eq:Xdensi} follows immediately from the convolution theorem of Fourier transformation, 
	\begin{align}
	f_{\X(t)}(x) 
	&= \mathfrak{F}^{-1} \varphi_{\X(t)}(x) 
	= \mathfrak{F}^{-1} \left(\e^{iux_0}\prod_{j=0}^M \varphi_{\Y{j}(s_j(t))}\right)(x)
	\\
	&= \left(\mathfrak{F}^{-1} \e^{iux_0} \ast \mathfrak{F}^{-1} \varphi_{\Y{0}(s_0(t))}(x) \ast \cdots \ast \mathfrak{F}^{-1} \varphi_{\Y{M}(s_M(t))}\right)(x) \label{eq:convolutiontheoremstep}
	\\
	&= \left(\delta_{x_0} \ast  f_{\Y{0}(s_0(t))} \ast \cdots \ast  f_{\Y{M}(s_M(t))}\right)(x),
	\end{align}
where we have used that $\mathfrak{F}^{-1} \e^{iux_0} (x)= \delta_{x_0}(x)$, 
with $\delta_{x_0}(x)$ the translated Dirac delta function.
\end{proof}
The proof of the second part of \Cref{thm:DensityInterpretation} relies on the result that the density of a sum of independent random variables is the convolution of the densities, and on the Fourier convolution theorem which connects the product shape of the characteristic function $\varphi_{\X(t)}(u)$ with a convolution of the densities. 
An additional observation is necessary before we proceed to prove the remaining part of the theorem.
	\begin{prop}\label{prop:quaderrorintegral}
	For every $j\in\{0,\dots,M\}$ and $t\geq 0$, the integral over the space domain of the {randomised component density} quadrature error vanishes, 
	$
	\int_\R \hat\varepsilon_{N_j}(t, y) \d y = 0.
	$
	\end{prop}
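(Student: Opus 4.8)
The plan is to recognise that $\hat\varepsilon_{N_j}(t,y)$ is, by \Cref{cor:Ydensity}, nothing but the pointwise discrepancy between the exact randomised density and its quadrature sum,
\[
\hat\varepsilon_{N_j}(t,y) = f(y; \Y{j}(t)) - \sum_{n_j=1}^{N_j} w_{n_j}\, f(y; \Yquad{n}{j}(t)),
\]
and then to integrate this identity over the spatial variable $y\in\R$ term by term. The argument reduces to three normalisation facts that I would verify in turn.

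First, $f(y;\Y{j}(t))$ is by construction the probability density function of the random variable $\Y{j}(t)$, so $\int_\R f(y;\Y{j}(t))\,\d y = 1$. Second, for each quadrature node $\theta_{n_j}$ the function $f(y;\Yquad{n}{j}(t))$ is the density of the conditional component process \eqref{eq:Yreal} evaluated at the deterministic parameter $\theta_{n_j}$, hence it too is a genuine probability density and integrates to unity. Third, I would check that the Gauss quadrature weights sum to one: since the pairs $(w_{n_j},\theta_{n_j})$ are associated with integration against the weight function $f_{\vartheta_j}$ and Gauss quadrature is exact for the constant polynomial, one has $\sum_{n_j=1}^{N_j} w_{n_j} = \int_{D_j} f_{\vartheta_j}(\theta_j)\,\d\theta_j = 1$, the last equality because $f_{\vartheta_j}$ is itself a probability density.

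Combining these three observations gives
\[
\int_\R \hat\varepsilon_{N_j}(t,y)\,\d y = \int_\R f(y;\Y{j}(t))\,\d y - \sum_{n_j=1}^{N_j} w_{n_j} \int_\R f(y;\Yquad{n}{j}(t))\,\d y = 1 - \sum_{n_j=1}^{N_j} w_{n_j} = 0,
\]
as claimed.

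The only point requiring genuine care is the interchange of the spatial integration with the integral representation $f(y;\Y{j}(t)) = \int_{D_j} f(y;Y^{\theta_j}_j(t))\,\d F_{\vartheta_j}(\theta_j)$ when verifying the first normalisation; this is licensed by Tonelli's theorem, since the conditional densities are nonnegative and each integrates to one, so that the double integral is finite and the order of integration may be exchanged. Apart from this routine measure-theoretic justification, the statement is a direct bookkeeping consequence of the fact that all densities involved integrate to one and the quadrature weights sum to one.
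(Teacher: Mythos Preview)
Your argument is correct and is essentially the same as the paper's: both integrate the identity $\hat\varepsilon_{N_j}(t,y) = f(y;\Y{j}(t)) - \sum_{n_j} w_{n_j} f(y;\Yquad{n}{j}(t))$ over $y$ and conclude from the three normalisation facts you list. You go slightly further than the paper in justifying $\sum_{n_j} w_{n_j}=1$ via exactness of Gauss quadrature on constants and in flagging the Tonelli step, but the route is the same.
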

\begin{proof}
By the definition of a density, it holds that $\int_\R f\left(y; \Y{j}(t)\right) \d y  = 1$.
Therefore, inserting the discretisation \eqref{eq:Ydensiquad} yields
	\begin{align}
	1 
	&= \int_\R f\left(y; \Y{j}(t)\right) \d y = \int_\R \left( \sum_{n_j=1}^{N_j} w_{n_j} f\left(y; \Yreal{j}(t)\right) + \hat\varepsilon_{N_j}(t, y) \right) \d y
	\\
	&= \sum_{n_j=1}^{N_j} w_{n_j} \int_\R f\left(y; \Yreal{j}(t)\right) \d y + \int_\R \hat\varepsilon_{N_j}(t, y) \d y
	= \sum_{n_j=1}^{N_j} w_{n_j} 1 + \int_\R \hat\varepsilon_{N_j}(t, y) \d y
	\\
	&= 1 + \int_\R \hat\varepsilon_{N_j}(t, y) \d y,
	\end{align}
and thus $ \int_\R \hat\varepsilon_{N_j}(t, y) \d y = 0$.
\end{proof}
\begin{proof}[Proof of \Cref{thm:DensityInterpretation}, ii.]
The convolution representation \eqref{eq:Xdensi} in \Cref{thm:DensityInterpretation}, i.\ can be expressed in integral form {by successive application of the associative law} as
\begin{equation}\label{eq:compdensquadintegralform}
f\left(x; \X(t)\right)
= \int_{\R^M}\int_\R \delta_{x_0}(y) 
		\left( \prod_{j=0}^{M-1} 
		f\left(z_j; \Y{j}(s_j(t))\right)
		\right)
		f\left(z_M;\Y{M}(s_M(t)) \right) 
		\d y \d \bm z,
\end{equation}
where we set 
$z_M := x - y - z_0 - \dots - z_{M-1}$
and gather the integration dummy variables (excluding $y$) in
$\bm z := (z_0, \dots, z_{M-1})$.
In the following, we abbreviate the quadrature sum by $\Gamma_j(z_j) := \sum_{n_j=1}^{N_j} w_{n_j} f\left(z_j; \Yquad{n}{j}(s_j(t))\right)$.
Then, substitution of $f\left(z_j; \Y{j}(s_j(t))\right)$
{in \eqref{eq:compdensquadintegralform} with its quadrature form, given in \eqref{eq:Ydensiquad},} yields
\begin{equation}\label{eq:B10}
f\left(x; \X(t)\right) 
		= \int_{\R^M}\int_\R \delta_{x_0}(y) 
		\prod_{j=0}^M \left( \Gamma_j(z_j) + \hat\varepsilon_{N_j}(t, z_j)\right)
		 \d y \d \bm z.
\end{equation}
We take a closer look at the product term $\prod_{j=0}^M \left( \Gamma_j(z_j) + \hat\varepsilon_{N_j}(t, z_j)\right)$. Factoring out the entire product yields three distinct types of summands.
First, the summand $\prod_{j=0}^M \Gamma_j(z_j)$, which contains no error terms $\hat\varepsilon_{N_j}(t, z_j)$.
Secondly, there are the $M+1$ summands that contain exactly one error term each, given by
\begin{equation}
G_j(\bm z) := \hat\varepsilon_{N_j}(t, z_j) \prod_{\substack{k=0 \\ k\neq j}}^M \Gamma_k(z_k).
\end{equation}
Finally, there are all the remaining mixed summands with two or more error terms, which we denote by $H(\bm z)$.
It thus holds
\begin{equation}
\prod_{j=0}^M \left( \Gamma_j(z_j) + \hat\varepsilon_{N_j}(t, z_j)\right) = \prod_{j=0}^M \Gamma_j(z_j) + \sum_{j=0}^M G_j(\bm z) + H(\bm z).
\end{equation}
Substituting this into \eqref{eq:B10} yields
\begin{multline}\label{eq:multibinomialsplit}
f\left(x; \X(t)\right) = \int_\R\int_{\R^M} \delta_{x_0}(y) \prod_{j=0}^M \Gamma_j(z_j) \d y \d \bm z
\\
+ \int_\R\int_{\R^M} \delta_{x_0}(y) \sum_{j=0}^M G_j(z_j) \d y \d \bm z
\, + \int_\R\int_{\R^M} \delta_{x_0}(y) H(\bm z) \d y \d \bm z.
\end{multline}
We recognize the first summand as the convolution of quadratures,
\begin{equation}
\int_\R\int_{\R^M} \delta_{x_0}(y) \prod_{j=0}^M \Gamma_j(z_j) \d y \d \bm z 
= \left(\delta_{x_0} \ast \Gamma_0 \ast \cdots \ast \Gamma_M\right) (x).
\end{equation}
The second set of integrals in \eqref{eq:multibinomialsplit} can be computed as
\begin{align}\label{eq:B16}
\int_\R\int_{\R^M} \delta_{x_0}(y) \sum_{j=0}^M G_j(z_j) \d y \d \bm z
&=\sum_{j=0}^M \int_\R\int_{\R^M} \delta_{x_0}(y)  G_j(z_j) \d y \d \bm z
\nonumber
\\
&=\sum_{j=0}^M \int_\R \delta_{x_0}(y) \int_{\R^M} \hat\varepsilon_{N_j}(t, z_j) \prod_{\substack{k=0\\k\neq j}}^M \Gamma_k(z_k) \d y \d \bm z.
\end{align}
{By bounding the summand where $j=M$, $\hat\varepsilon_{N_j}(t, z_M)$, with its supremum over $z$, it becomes possible to factor it out of the integrals,} so that $M+1$ integrators $(y, z_0, \dots, z_{M-1})$ remain for $M+1$ independent integrands, allowing those to be solved consecutively. Each such integral equals $1$, respectively, by the definition of the Dirac delta and since $\Gamma_k(z_k)$ is a weighted sum of probability density functions:
\begin{multline}
\int_\R \delta_{x_0}(y) \int_{\R^M} \hat\varepsilon_{N_M}(t, z_M) \prod_{\substack{k=0}}^{M-1} \Gamma_k(z_k) \d y \d \bm z
\\
\leq \left( \int_\R \delta_{x_0}(y)\d y {\int_{\R^M}} \prod_{j=0}^{M-1}\Gamma_k(z_k) \d \bm z\, \right) \sup\limits_{z\in\R} \hat\varepsilon_{N_M}(t, z)
\leq \sup\limits_{z\in\R} \hat\varepsilon_{N_M}(t, z).
\end{multline}
\par
When $j<M$ in \eqref{eq:B16}, it holds that 
\begin{multline}
\int_\R \delta_{x_0}(y) \int_{\R^M} \hat\varepsilon_{N_j}(t, z_j) \prod_{\substack{k=0\\k\neq j}}^M \Gamma_k(z_k) \d y \d \bm z
\\
\leq \int_\R \delta_{x_0}(y)\d y 
	\left(\int_{\R^M} \prod_{\substack{k=0\\k\neq j}}^{M-1} \Gamma_k(z_k)  
	\left(\int_\R \Gamma_M(z_M) \d z_j\right) \d \bm{z_*}\right)
	\sup\limits_{z\in\R} \hat\varepsilon_{N_j}(t, z)
\\
\leq \sup\limits_{z\in\R} \hat\varepsilon_{N_j}(t, z),
\end{multline}
{where $\bm{z_*}:= (z_0, \dots, z_{j-1}, z_{j+1}, \dots, z_{M-1})$. Here, it was used that} the integrand $\Gamma_M$ with argument $z_M = x - y - \sum_{j=0}^{M-1} z_j$ can be integrated against the `vacant' integrator $z_j$, which was freed up by bounding $\hat\varepsilon_{N_j}(t, z_j)$ against its supremum.
\par
From \eqref{eq:multibinomialsplit}, the integrals containing all the mixed terms $H(\bm z)$ with at least two error terms remain. We exemplary treat a summand with exactly two error terms, as the argument applies analogously to all other summands. 
To this end, consider
\begin{equation}
h_{ij} = \int_\R\int_{\R^M} \delta_{x_0}(y) \hat\varepsilon_{N_j}(t, z_j) \hat\varepsilon_{N_i}(t, z_i) \smashoperator{\prod_{\substack{k=0 \\ k\neq j, k \neq i}}^M} \Gamma_k(z_k) \d y \d \bm z,
\end{equation}
and let, without loss of generality, $i \neq M$, as we may otherwise exchange the roles of $i$ and $j$. Bounding $\hat\varepsilon_{N_j}(t, z_j)$ by its supremum, exactly like in the previous step, must necessarily produce an integral $\int_\R \hat\varepsilon_{N_i}(t, z_i) \d z_i = 0$, by  \Cref{prop:quaderrorintegral}. Existence of such a zero factor mandates that $h_{ij} = 0$. The same argument can be made for the remaining summands in $H(\bm z)$, whenever two or more error terms are present. 
Therefore, it holds that 
\begin{equation}
\int_\R\int_{\R^M} \delta_{x_0}(y) H(\bm z) \d y \d \bm z\ = 0,
\end{equation}
and we obtain
\begin{equation}
f\left(x; \X(t)\right) =  \left(\delta_{x_0} \ast \Gamma_0 \ast \cdots \ast \Gamma_M\right) (x) + \hat\varepsilon(t),
\end{equation}
with $\hat\varepsilon(t) \leq \sum_{j=0}^M \sup\limits_{y\in\R} \hat\varepsilon_{N_j}(t, y).$
\par \noindent
It remains to show that 
\begin{equation}\label{eq:convolutiondensi1}
\left(\delta_{x_0} \ast \Gamma_0 \ast \cdots \ast \Gamma_M\right) (x) 
= \sumquad \Wquad
f\left(x; x_0 + \sum_{j=0}^M \Yquad{n}{j}(s_j(t)) \right),
\end{equation}
since $f\left(x; x_0 + \sum_{j=0}^M \Yquad{n}{j}(s_j(t))\right) = f\left(x; \Xquad(t)\right)$ follows immediately from \Cref{def:Xprocess} of $\Xreal{}(t)$.
\par
With the distributive property and associativity of the convolution, it is possible to rewrite the convolution representation of the density as
\begin{equation*}
\left(\delta_{x_0} \ast \Gamma_0 \ast \cdots \ast \Gamma_M\right) (x) 
= \hspace{-8pt} \sumquad \Wquad \left( f_{\Yquad{n}{0}(s_0(t))} \ast \dots \ast f_{\Yquad{n}{M}(s_M(t))} \ast \delta_{x_0}	\right) (x).
\end{equation*}
{Here, we used that} for values of $j>\sup\{k\colon \tau_k \leq t\}$, it holds that $s_j(t) = 0$ and thus $f\left(x; \Y{j}(s_j(t))\right) = \delta_0(x)$. 
For such values of $j$, it holds the trivial quadrature representation
\begin{equation}
\delta_0(x) 
= \sum_{n_j=1}^{N_j} w_{n_j} f\left(x; \Yquad{n}{j}(0)\right) =  \sum_{n_j=1}^{N_j} w_{n_j} \delta_0(x),
\end{equation}
since the sum of quadrature weights is one, $\sum_{n_j=1}^{N_j} w_{n_j}=1$. The Dirac delta function $\delta_0$ is the identity element of convolution, therefore we need not specially accommodate for component processes which have not yet been reached by time $t$.

%
{Finally,} by the independence of the component processes $\Yreal{j}$ and $\Yreal{k}$ for $j\neq k$, the convolutions {above} can be equivalently expressed as the density of a sum,
\begin{equation}
\left( f_{\Yquad{n}{0}(s_0(t))} \ast \dots \ast f_{\Yquad{n}{M}(s_M(t))} \ast \delta_{x_0}\right) (x) = f\left(x; x_0 + \sum_{j=0}^M \Yquad{n}{j}(s_j(t))\right),
\end{equation}
which concludes the proof.
\end{proof}

\end{document}